\documentclass[pra,aps,notitlepage,nofootinbib]{revtex4-1}


\usepackage{bm}
\usepackage{bbm}
\usepackage{epsfig}
\usepackage{wrapfig}
\usepackage{float}
\usepackage{setspace}
\usepackage{mathrsfs}
\usepackage[normalem]{ulem}
\usepackage{enumerate}
\usepackage{tikz}
\usepackage{amsmath,amssymb,amsfonts,amsthm}
\usepackage{graphicx,xcolor}
\usepackage[a4paper,margin=2.4cm]{geometry}


\definecolor{mygenta}{cmyk}{0.25, 1, 0.25, 0.25}


\newcommand{\ignore}[1]{}
\newtheorem{theorem}{Theorem}
\newtheorem{proposition}[theorem]{Proposition}
\newtheorem{corollary}[theorem]{Corollary}
\newtheorem{maintheorem}[theorem]{Theorem}
\newtheorem{lemma}[theorem]{Lemma}

\newcommand{\surfaceangle}{%
\begin{tikzpicture}%
\draw (1.5ex,1.5ex) to [out=215,in=50] (0,0) to [out=-15,in=180]
(1.4ex,-0.1ex);%
\end{tikzpicture}%
}
\ignore{

}
\newcommand{\sphericaltriangle}{%
\begin{tikzpicture}%
\draw (0,0) to [out=70,in=230] (0.86ex,1.5ex) to [out=320,in=110] (1.72ex,0) to [out=190,in=-10] (0,0);%
\draw (0,0) to [out=70,in=230] (0.86ex,1.5ex) to [out=320,in=110] (1.72ex,0);%
\end{tikzpicture}%
}
\newcommand{\continuousp}[2]{\ensuremath{p_{#1}(#2)}}

\newcommand{\br}[1]{{\stackrel{\curvearrowright}{#1}}}
\newcommand{\lo}{\theta}
\newcommand{\la}{\delta}
\newcommand{\jp}{\phi}
\newcommand{\cphase}{\lo}
\newcommand{\cjump}{\jp}

\begin{document}

\title{Globe-hopping}

\author{Dmitry \surname{Chistikov}}
\affiliation{Centre for Discrete Mathematics and its Applications
  (DIMAP) \& Department of Computer Science, University of Warwick,
Coventry\hspace{0mm} CV4 7AL, U.K.} 
\email{d.chistikov@warwick.ac.uk}

\author{Olga \surname{Goulko}}
\affiliation{Boise State University, Department of Physics, 1910 University Drive
Boise, ID 83725-1570, U.S.A.}
\email{olgagoulko@boisestate.edu}

\author{Adrian \surname{Kent}}
\affiliation{Centre for Quantum Information and Foundations, DAMTP, Centre for
  Mathematical Sciences, University of Cambridge, Wilberforce Road,
  Cambridge CB3 0WA, U.K.}
\affiliation{Perimeter Institute for Theoretical Physics, 31 Caroline Street North, Waterloo, ON N2L 2Y5, Canada.}
\email{apak@damtp.cam.ac.uk} 

\author{Mike \surname{Paterson}}
\affiliation{Centre for Discrete Mathematics and its Applications (DIMAP) \& Department of Computer Science, University of Warwick, Coventry CV4 7AL, U.K.}
\email{m.s.paterson@warwick.ac.uk}

\date{\today} 

\begin{abstract}
We consider versions of the grasshopper problem~\cite{goulko2017grasshopper} on the circle and the sphere, which are relevant to Bell inequalities.  
For a circle of circumference $2\pi$, 
we show that for unconstrained lawns of any length and arbitrary jump lengths, the supremum of the probability 
for the grasshopper's jump to stay on the lawn is one. 
For antipodal lawns, which by definition contain precisely one of 
each pair of opposite points and have length $\pi$, we show this is true except when
the jump length~$\cjump$ is of the form $\pi\frac{p}{q}$ with $p,q$ coprime and $p$ odd. 
For these jump lengths we show the optimal probability is $1 - 1/q$
and construct optimal lawns.   
For a \emph{pair} of antipodal lawns, we show that the optimal probability of 
jumping from one onto the other is $1 - 1/q$ for $p,q$ coprime, $p$
odd and $q$ even, and one in all other cases.  
For an antipodal lawn on the sphere, it is known~\cite{kent2014bloch} that if $\jp = \pi/q$, 
where $q \in \mathbb N$, then the optimal retention probability of $1-1/q$ for 
the grasshopper's jump is provided by a hemispherical lawn. 
We show that in all other cases where $0<\jp < \pi/2$,
hemispherical lawns are not optimal, disproving the
hemispherical colouring maximality hypotheses~\cite{kent2014bloch}.
We discuss the implications for Bell experiments and 
related cryptographic tests.  
\end{abstract}
\maketitle
  
\section{Introduction}

The grasshopper problem was first introduced in 
Ref.~\cite{kent2014bloch}, which analysed Bell
inequalities for the case where two parties carry out spin
measurements about randomly chosen axes and obtain the 
spin correlations for pairs of axes separated by angle $\jp$.     
It was noted \cite{kent2014bloch} that tighter bounds could be
obtained by a version of the following problem on 
the Bloch sphere.   Half the area of a sphere is covered by a lawn, with the property that
exactly one of every pair of antipodal points belongs to the lawn.  
A grasshopper lands at a random point on the lawn, and then jumps 
in a random direction through spherical angle $\jp$.   What lawn shape maximises the 
probability that the grasshopper remains on the lawn after jumping,
and what is this maximum probability (as a function of $\jp$)?  

Ref.~\cite{goulko2017grasshopper} studied the planar version of the grasshopper problem,
giving a combination of analytic and numerical results, and also
discussed several interesting variants.  
As these discussions illustrate, the
grasshopper problem is an appealing problem in geometric combinatorics
which is of intrinsic interest, independent of its original motivations.  

In this paper we consider versions of the
grasshopper problem on the circle and the sphere.
One can similarly motivate this work as an exploration 
of geometric combinatorics in simple manifolds with non-trivial
topologies. Although the circle defines the simplest non-trivial
version of the problem, its solution still has some interesting
features. Results for antipodal lawns on the circle 
also imply some optimality results for antipodal lawns on
the sphere, since a spherical antipodal lawn defines a 
circular one for every great circle.   The spherical version of
the grasshopper problem has some features in common with the planar
version, but the non-trivial topology and compactness mean that 
planar results do not always have direct parallels, and one would 
not expect all regimes of the planar ``phase diagrams'' of Ref.~\cite{goulko2017grasshopper}
to have qualitatively similar parallels in the spherical case.     

Another strong motivation is to develop further the analysis of 
Bell inequalities initiated in Ref.~\cite{kent2014bloch}, which
considered the average anti-correlations attainable by local hidden
variable models for random pairs of measurement axes separated by
a given fixed angle.    
Both the circle and the sphere are relevant here, since the 
circle parametrizes the simplest class of projective polarization measurements
commonly used in Bell experiments on photons, while the sphere parametrizes
all possible projective polarization measurements, or more generally
all possible projective measurements on any physical system defining a
qubit.   We show that, perhaps surprisingly, local hidden variable
models can produce perfect anticorrelations for random pairs of 
axes separated by angle $\cjump$ on the circle, unless $\cjump = \pi
(p/q)$ with $p$ odd, $q$ even and $(p,q)=1$, i.e.\ $p$ and $q$ are coprime.  This means that any imprecision in 
the axis separation, however slight, allows classical simulation 
of quantum correlations.   For the sphere, our results show that
hemispherical lawns are not optimal unless $\jp= \pi (p/q)$, 
where $p$ is odd and $(p,q)=1$.   This means that local hidden
variable models can achieve stronger anticorrelations than 
previously realised for generic $\jp$.   We discuss these
results and their implications further below.

\section{The grasshopper on a circle}

We first consider what seems to be the simplest non-trivial
version of the problem, in which the grasshopper is constrained to
jump through a known fixed angle around a circle of circumference $2 \pi$.   
This also allows us to study the significance of an antipodal 
condition, by considering lawns of length $\pi$ that contain
precisely one of every pair of antipodes. 

Then we consider the case in which there are two potentially
independent (maybe overlapping) antipodal lawns, in which the grasshopper starts
at a random point on one.  As before, it jumps through a 
known fixed angle in a random direction. 
In this case, the question is how to configure the
lawns to maximize the probability that it lands on the second.   

Ref.~\cite{damianthesis} previously discussed the grasshopper
on the circle, with two antipodal lawns, and noted the optimality 
of the lawn $S_{\pi, q}$ for the case of rational jumps with even 
numerator (see below).

These versions of the problem are physically motivated as follows. 
A great circle on the Bloch sphere  defines 
linear polarization measurements, which are easily implemented
and commonly used in Bell and other cryptographic tests.
``Classical'' hidden variable models for these measurement outcomes are defined by 
antipodal colourings of the circle.   A simple hidden variable model
for the quantum singlet state,
in which the outcomes of the same measurements on both subsystem
are perfectly anticorrelated, is defined by a single antipodal colouring.
The most general model, in which measurement outcomes may be
independent, is defined by a pair of antipodal colourings. 
The retention probabilities define the anticorrelations 
predicted by these models, which can be compared to those 
predicted by quantum theory via Bell inequalities.   
As we discuss below, our results have interesting implications for
testing and simulating quantum entanglement.  

\subsection{Statement of the problem}

\qquad{\bf General lawns:} \qquad Following  Ref.~\cite{goulko2017grasshopper}, the most general version
of the grasshopper problem on the circle allows lawns of variable density,
defined by a measurable probability density function 
$f$ on the circle ${\mathbb S}^1$ satisfying $ f (\mathbf{\cphase}) \in
[ 0 , 1 ]$  for all $\mathbf{\cphase} \in \left[ 0 , 2 \pi \right)$ and 

\begin{equation}
\int_{{\mathbb S}^1}d \cphase f  (\mathbf{\cphase}) = L \, . 
\label{eq:munorm}
\end{equation}
Here $L$ is the lawn length.
We take the circle to have length $2 \pi$,
so the non-trivial cases have $0 < L < 2 \pi$.  

It will suffice for most of our discussion to consider indicator functions $f$ with 
$f ( \mathbf{\cphase} ) \in \{ 0, 1 \}$.  
We represent such lawns by measurable subsets $S \subset \left[ 0 , 2
  \pi \right)$, where $S = \{ \cphase \, : \, f ( \mathbf{\cphase} ) =1 \}$
has measure $\mu (S) = L$.   

The functional $\continuousp{f}{\cjump}$ is 
defined by 
\begin{equation}\label{successprob}
  \continuousp{f}{\cjump} = 
\frac{1}{2L} \int_{{\mathbb S}^1}d \cphase [ f(\mathbf{\cphase}) f (\mathbf{\cphase + \cjump} )  
 + f(\mathbf{\cphase}) f (\mathbf{\cphase - \cjump} ) ] \,  . 
\end{equation} 
Here and below all angles are taken modulo $2 \pi$.  
We refer to the expression in Equation~(\ref{successprob}) as the {\it retention probability}: it
defines the probability that a grasshopper starting at a randomly chosen point on the
lawn remains on the lawn after jumping through angle $\phi$ in a random direction. 

The grasshopper problem is then to answer the following.   
What is the supremum of $\continuousp{f}{\cjump}$ over all such
functions $f$, for each value of $\cjump \in \left[0, 2 \pi
\right)$?  Which $f$, if any, attain the supremum?  Or if none, which
sequences approach the supremum value?

We will show below that the supremum value is $1$ for 
all $\cjump \in \left[0, 2 \pi
\right)$.    

\qquad{\bf Antipodal lawns:}  \qquad For the case $L = \pi$, we also consider these questions restricted to 
antipodal lawns, for which $f ( \cphase ) \in \{ 0 , 1 \}$ and  $f (
\cphase + \pi ) = \overline{f( \cphase )} = 1 - f (\cphase )$
for all $\cphase$. 

We will show below that the supremum value is $1$ for 
all $\cjump \in \left[0, 2 \pi
\right)$ except $\cjump$ of the form $\pi \frac{p}{q}$ where $p$ and $q$
are coprime and $p$ is odd, when it is $1 - \frac{1}{q}$.

\qquad{\bf Two independent antipodal lawns:} \qquad  We can extend
the grasshopper problem to the case of two lawns given by 
antipodal measurable subsets $S^A , S^B$ of the circle, defined by
suitable indicator functions $f^A , f^B \, : \, \left[ 0, 2 \pi
\right) \rightarrow \{ 0, 1 \}$, where $S^X = \{ \cphase :
f^X ( \cphase ) = 1 \}$.   
Here the relevant functional for jump $\cjump$ is 
defined by 
\begin{equation}\label{successprobtwolawns}
  \continuousp{f^A , f^B }{\cjump} = 
\frac{1}{2L} \int_{{\mathbb S}^1}d \cphase [ f^A (\mathbf{\cphase}) f^B (\mathbf{\cphase + \cjump} )  
     + f^A (\mathbf{\cphase}) f^B (\mathbf{\cphase - \cjump} ) ]  \,  . \nonumber
\end{equation} 
As we explain below, this is the version of the problem relevant to analysing general local
hidden variable models for bipartite quantum states, where the
measurements chosen on the two subsystems are independent and 
each parametrised by the circle.  
We will show below that the supremum value is $1$ for 
all $\cjump \in \left[0, 2 \pi
\right)$ except $\cjump$ of the form $\pi \frac{p}{q}$ where $p$ and $q$
are coprime with $p$ odd and $q$ even, when it is $1 - \frac{1}{q}$.    

This illustrates that the two-lawn version
of the problem is a non-trivial extension of the one-lawn version, 
even in the case of the circle.   In particular,  
for jump values of the form $\pi \frac{p}{q}$ with $p$ and $q$ coprime
and both odd, the optimal one-lawn jump probability is 
strictly less than $1$, while the optimal two-lawn jump probability is $1$.   

\qquad{\bf One-directional grasshopper:} \qquad A natural and apparently simpler version of the grasshopper problem in
one dimension is given by assuming that the grasshopper always 
jumps in the same direction.   For a grasshopper on the 
real line \cite{goulko2017grasshopper}, this restriction makes no essential difference:
the supremum probability is still $1$ for any jump length, and the
same limiting construction works for the one-directional and 
standard bidirectional case.   

Similarly, one can assume that the grasshopper on the circle only
jumps clockwise (or anticlockwise).   For one lawn, our results and constructions
for the bidirectional case carry over straightforwardly to this
case, so we will not discuss it separately. 
With two lawns, the construction 
$S^A = \left[ 0, L \right)$, $S^B = \left[ \cjump, L + \cjump \right)$ 
obviously gives retention probability $1$ for clockwise jump $\cjump$. 

For $L = \pi$ this construction defines antipodal lawns.  
The one-directional grasshopper problem on the circle thus turns out
to be of no independent interest, and we will not discuss it further.  

\subsection{Solution for general lawns}

\begin{lemma}[rational case] The optimal retention probability is $1$ for
a lawn of length $L$ with jump $\cjump=\frac{p}{q} 2 \pi$, 
where the highest common factor $(p,q)=1$.    
\end{lemma}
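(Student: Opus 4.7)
The plan is to construct an explicit measurable lawn $S$ of length $L$ that is exactly invariant under the shift by $\cjump$, which immediately forces the retention probability to attain its trivial upper bound of~$1$.

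First, I would observe that since $(p,q)=1$, the iterates $k\cjump \pmod{2\pi}$ for $k = 0, 1, \ldots, q-1$ coincide with all $q$ multiples of $2\pi/q$. Hence a measurable subset of~${\mathbb S}^1$ is invariant under the single rotation by $\cjump$ if and only if it is invariant under the full cyclic group $\mathbb{Z}/q$ of rotations by multiples of $2\pi/q$. This reduces the construction to exhibiting any $\mathbb{Z}/q$-invariant set of total measure~$L$.

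Next, I would take $a = L/q$, which satisfies $a \le 2\pi/q$ since $L < 2\pi$, and set
\[
 S \;=\; \bigcup_{k=0}^{q-1} \left[\, k\,\tfrac{2\pi}{q},\; k\,\tfrac{2\pi}{q} + a \,\right) \pmod{2\pi}.
\]
The arcs are pairwise disjoint, so $\mu(S) = qa = L$, and by construction $S + 2\pi/q = S$, hence $S + \cjump = S$. Letting $f$ be the indicator function of $S$, one therefore has $f(\cphase)\,f(\cphase \pm \cjump) = f(\cphase)$ for almost every $\cphase$, so the integrand in~(\ref{successprob}) equals $2 f(\cphase)$ and $\continuousp{f}{\cjump} = \tfrac{1}{2L}\cdot 2L = 1$.

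There is no genuine obstacle here; the only point that deserves care is the coprimality-to-cyclic-group observation, which is immediate from B\'ezout. The same construction does \emph{not} survive the antipodal constraint $f(\cphase + \pi) = 1 - f(\cphase)$---for generic $p,q$ the rotation by $2\pi/q$ sends lawn points to lawn points rather than to their complements---which is why the antipodal case treated in the subsequent lemma has genuine content and a more delicate optimum.
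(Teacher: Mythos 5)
Your construction is exactly the paper's lawn $S_{L,q}=\bigcup_{j=0}^{q-1}\left[\,2\pi\frac{j}{q},\,2\pi\frac{j}{q}+\frac{L}{q}\,\right)$, and the paper likewise simply asserts that it has length $L$ and retention probability $1$. You merely spell out the (correct) justification the paper leaves implicit, namely that coprimality makes rotation-by-$\cjump$ invariance equivalent to $\mathbb{Z}/q$ invariance; this is the same proof.
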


\begin{proof} 
Define the lawn $S_{L,q}$ to be 
\begin{equation}\label{lawnln}
S_{L,q} = \bigcup_{j=0}^{q-1} \left[  2 \pi  \frac{j}{q} ,  2 \pi  
  \frac{j}{q} + \frac{L}{q}  \right) 
\end{equation}
An example of such a lawn in depicted in
Figure~\ref{fig:general_rational}. Clearly $S_{L,q}$ has length $L$ and
retention probability~$1$.  
\end{proof}

This solves the problem for rational jumps.
\begin{figure}
\includegraphics[width=0.4\textwidth]{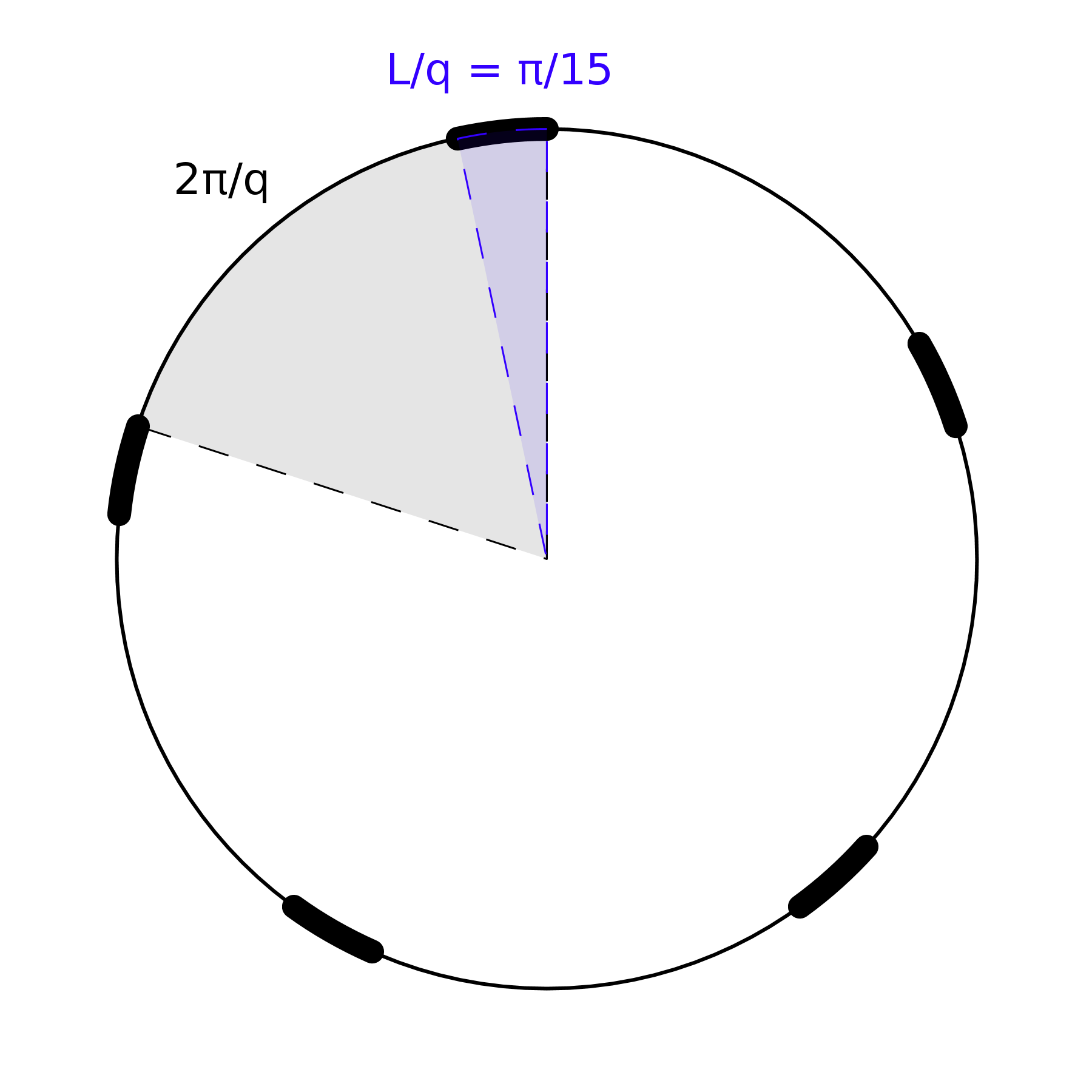}
\caption{\label{fig:general_rational}Sketch of an optimal general lawn
  $S_{L,q}$ with $L=\pi/3$,
for a rational jump $\cjump=\frac{p}{q}2 \pi$ with $p=2$ and $q=5$.}
\end{figure}

\begin{lemma}[irrational case]
The supremum retention probability is $1$ for a lawn of length $L$ 
when the jump $\cjump$ is an irrational multiple of $2 \pi$.    
\end{lemma}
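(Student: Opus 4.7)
The plan is to reuse the lawn construction $S_{L,n}$ from the previous lemma while choosing $n$ so that some $2\pi p/n$ closely approximates $\cjump$. Since $\cjump/(2\pi)$ is irrational, one can make the approximation arbitrarily accurate, and the retention probability of $S_{L,n}$ degrades continuously with the approximation error; so the supremum will be $1$ even though it is not attained.

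First, I would invoke Dirichlet's approximation theorem for the irrational number $\cjump/(2\pi)$: it supplies infinitely many pairs of positive integers $(p,n)$ with $|\cjump/(2\pi) - p/n| < 1/n^2$, equivalently $|n\cjump - 2\pi p| < 2\pi/n$. Given any $\epsilon > 0$, I would pick such a pair with $2\pi/n < \epsilon L$ and set $\eta := \cjump - 2\pi p/n$, so that $|n\eta| < \epsilon L$ and (for $\epsilon<1$) also $|\eta| < L/n$.

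Next, I would take $f$ to be the indicator function of $S_{L,n}$, the union of the $n$ equally spaced intervals $I_j = [2\pi j/n, 2\pi j/n + L/n)$ defined in Equation~(\ref{lawnln}). The translate $I_j + \cjump$ is the interval $I_{(j+p)\bmod n}$ shifted by $\eta$ (mod $2\pi$), hence overlaps $I_{(j+p)\bmod n}$ in a set of length $L/n - |\eta|$; any overflow of $I_j + \cjump$ into a neighbouring lawn component is a bonus and only increases the overlap with $S_{L,n}$. Summing over $j = 0,\ldots,n-1$, the measure of starting points whose forward jump lands on the lawn is at least $L - n|\eta|$, and by the symmetric choice of approximation $-2\pi p/n$ to $-\cjump$ the backward jump gives the same bound. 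Substituting into~(\ref{successprob}) then yields $\continuousp{f}{\cjump} \geq 1 - n|\eta|/L > 1 - \epsilon$, and letting $\epsilon \to 0$ proves the supremum equals~$1$.

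The main subtle point is that the boundary loss of size $|\eta|$ occurs once per interval of $S_{L,n}$, so one needs $n|\eta|$, not merely $|\eta|$, to be small. This is exactly what the quadratic quality of Dirichlet's theorem supplies; alternatively, Kronecker's density of $\{n\cjump/(2\pi) \bmod 1\}$ in $[0,1)$ provides the required pairs directly without an explicit rate. The remaining checks---disjointness of the intervals $I_j$, staying within the regime $|\eta| < L/n$, and accounting for possible overflow into adjacent lawn components---are routine and can only improve the bound.
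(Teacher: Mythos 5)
Your proof is correct, but it takes a genuinely different route from the paper's. The paper builds a ``chain'' lawn: it takes a tiny seed interval $[0,\epsilon)$ and unions up its consecutive translates $[j\cjump, j\cjump+\epsilon)$ until the accumulated length reaches $L$ (density of the irrational rotation orbit, via Weyl, guarantees this happens), so that every interior link of the chain maps onto the next one exactly and the only possible losses come from the two ends of the chain, giving retention probability at least $1-O(\epsilon/L)$. You instead recycle the periodic lawn $S_{L,n}$ from the rational lemma, choosing $p/n$ to be a Dirichlet-quality approximation of $\cjump/(2\pi)$, and bound the total boundary loss by $n|\eta|$ with $\eta=\cjump-2\pi p/n$. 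You correctly identify the one point where your route genuinely needs more than density: because the loss of size $|\eta|$ is incurred once per each of the $n$ components, you need $n|\eta|\to 0$, which is exactly what the quadratic error $|\eta|<2\pi/n^2$ from Dirichlet's theorem supplies; mere density of $\{n\cjump/(2\pi)\bmod 1\}$ would not by itself control $n|\eta|$ (your parenthetical suggestion that Kronecker density alone suffices is the only shaky remark, but it is inessential since Dirichlet does the job). Interestingly, your argument is essentially the same strategy the paper itself uses later for the \emph{antipodal} irrational case (Hurwitz/Uchiyama approximations applied to the lawn $S_{\pi,q}$), so your approach has the virtue of treating the general and antipodal irrational cases uniformly, at the cost of invoking a quantitative approximation theorem where the paper's chain construction needs only a qualitative density fact.
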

\begin{proof} 
Define the lawn $S_{\epsilon, \cjump}^{k}$ to be 
\begin{equation}
S_{\epsilon,\cjump}^{k} = \bigcup_{j=0}^{k-1} \left[ j \cjump , j \cjump + \epsilon
\right) \, , 
\end{equation}
where (following the usual definition of set union) 
any overlapping intervals are counted only once.
Let $K$ be the largest value of $k$ such that 
$S_{\epsilon,\cjump}^{k}$ has length smaller than $L$. 
Weyl's equidistribution theorem \cite{weyl1910gibbs} guarantees that such a $K$
exists for any $\epsilon$ in the range $L > \epsilon > 0$. 
Now define the lawn $S_{L, \epsilon, \cjump}$ by 
\begin{equation}
S_{L, \epsilon, \cjump} = S_{\epsilon,\cjump}^{K} \cup \left[K \cjump, K \cjump +
  \delta \right) \, ,
\end{equation}
where $\delta \in \left(0,\epsilon\right]$ is the smallest
value in that range such that $S_{L,\epsilon, \cjump}$ has length $L$.  
Unless the grasshopper starts in the first interval and jumps
anticlockwise, or the last interval and jumps clockwise, it
remains on the lawn, which thus has retention probability $\geq 1 - \epsilon$.
A sequence with $\epsilon \rightarrow 0$ gives us the
supremum value of $1$.
\end{proof} 

Thus we obtain
\begin{theorem}
For unconstrained lawns of length $L$ the supremum retention probability is 1. The supremum is attained for rational jumps $\cjump=\frac{p}{q} 2 \pi$.\end{theorem}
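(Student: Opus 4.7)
The theorem is essentially a direct corollary of the two lemmas just established, so my plan is to combine them and verify that they cover all cases of $\cjump \in [0, 2\pi)$.

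First I would split into cases according to whether $\cjump$ is a rational multiple of $2\pi$ or not. If $\cjump = \frac{p}{q}\,2\pi$ with $(p,q)=1$, then the rational-case lemma supplies the explicit lawn $S_{L,q}$ of length $L$ attaining retention probability exactly $1$; this simultaneously shows the supremum is at least $1$ and that it is attained. If instead $\cjump/(2\pi)$ is irrational, the irrational-case lemma produces, for every $\epsilon \in (0,L)$, a lawn $S_{L,\epsilon,\cjump}$ of length $L$ with retention probability at least $1-\epsilon$; letting $\epsilon \to 0$ gives supremum at least $1$.

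For the matching upper bound I would note that the integrand in the definition of $\continuousp{f}{\cjump}$ satisfies $f(\cphase)f(\cphase\pm\cjump) \le f(\cphase)$ pointwise, since $f$ takes values in $[0,1]$. Integrating and using the normalization $\int_{\mathbb S^1} f\,d\cphase = L$ yields
\begin{equation}
\continuousp{f}{\cjump} \le \frac{1}{2L}\Bigl[\int f(\cphase)\,d\cphase + \int f(\cphase)\,d\cphase\Bigr] = 1,
\end{equation}
so no lawn can exceed retention probability $1$. Combined with the two lemmas, this gives the supremum value of $1$ for every $\cjump$, with attainment precisely in the rational regime.

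There is no real obstacle here; the only mild subtlety is checking that the rational lawn $S_{L,q}$ is genuinely well-defined in equation~(\ref{lawnln}), i.e.\ that the intervals $[2\pi j/q,\, 2\pi j/q + L/q)$ are pairwise disjoint (which holds because $L < 2\pi$ implies $L/q < 2\pi/q$), and that the jump by $\cjump = 2\pi p/q$ permutes these $q$ intervals cyclically (since $(p,q)=1$), so every starting point lands back on the lawn regardless of jump direction. Once this is verified, the theorem follows immediately.
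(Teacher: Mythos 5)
Your proposal is correct and follows essentially the same route as the paper, which simply combines the rational-case and irrational-case lemmas; the extra verifications you add (the trivial upper bound $\continuousp{f}{\cjump}\le 1$ and the disjointness and cyclic-permutation properties of $S_{L,q}$) are sound and only make explicit what the paper leaves implicit.
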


\subsection{Solution for antipodal lawns}

\begin{lemma}[rational case, even numerator]
The optimal retention probability is $1$ for an antipodal lawn 
with jump $ \cjump = \pi \frac{p}{q} $,
where $(p,q)=1$ and $p$ is even. 
\end{lemma}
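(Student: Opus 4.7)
The plan is to reduce this to the general rational case already proven. Because $(p,q)=1$ and $p$ is even, $q$ must be odd. Writing $p=2p'$, the jump becomes $\cjump = \pi p/q = 2\pi p'/q$ with $(p',q)=1$. So the lawn $S_{\pi,q}$ constructed in the first lemma (with $L=\pi$) already has retention probability $1$ for this jump, and the only task is to show that this particular lawn is antipodal.

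I would verify antipodality by a direct parity/index calculation. Rewrite the intervals of $S_{\pi,q}$ as $[\pi m/q,\, \pi(m+1)/q)$ for even $m\in\{0,2,\dots,2q-2\}$. Translating any such interval by $\pi$ gives $[\pi(m+q)/q,\, \pi(m+q+1)/q)$ (mod $2\pi$). Since $q$ is odd, $m+q$ is odd, and as $m$ runs through the even residues mod $2q$, the quantity $m+q$ runs through all odd residues mod $2q$. Hence $S_{\pi,q}+\pi$ equals the union of $[\pi m/q,\,\pi(m+1)/q)$ over odd $m$, which is precisely the complement of $S_{\pi,q}$ in ${\mathbb S}^1$. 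So $f(\cphase+\pi)=1-f(\cphase)$ for almost every $\cphase$, establishing the antipodal property.

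Putting the two observations together, $S_{\pi,q}$ is an antipodal lawn of length $\pi$ with retention probability exactly $1$ for jump $\cjump=\pi p/q$ (with $p$ even and $(p,q)=1$), which proves the lemma. There is no real obstacle: the retention-probability $1$ claim is inherited from the earlier lemma, and antipodality is immediate from the fact that $q$ is odd, which makes the $\pi$-shift precisely swap the even-indexed and odd-indexed sub-intervals of the equipartition of $[0,2\pi)$ into $2q$ arcs of length $\pi/q$.
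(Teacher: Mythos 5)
Your proposal is correct and takes essentially the same route as the paper: both use the lawn $S_{\pi,q}$ from the general rational-jump lemma and observe that, because $q$ is odd, it is antipodal. The paper simply asserts antipodality as ``clear,'' whereas you supply the short parity argument (the $\pi$-shift swaps even- and odd-indexed arcs of the $2q$-arc equipartition), which is a welcome bit of extra detail but not a different proof.
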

\begin{proof}
An antipodal lawn has length $\pi$. 
Since $(p,q)=1$ and $p$ is even, $q$ is odd. 
From Eqn. (\ref{lawnln}), we have 
\begin{equation}
S_{\pi,q} = \bigcup_{j=0}^{q-1} \left[ 2 j  \pi \frac{1}{q} , (2j+1) \pi
  \frac{1}{q} \right) 
\end{equation}
Clearly $S_{\pi,q}$ has length $\pi$, is antipodal, and has retention
probability $1$.  
\end{proof} 
An example for such a lawn in depicted in Figure~\ref{fig:antipodal_rational_even}.
\begin{figure}
\includegraphics[width=0.4\textwidth]{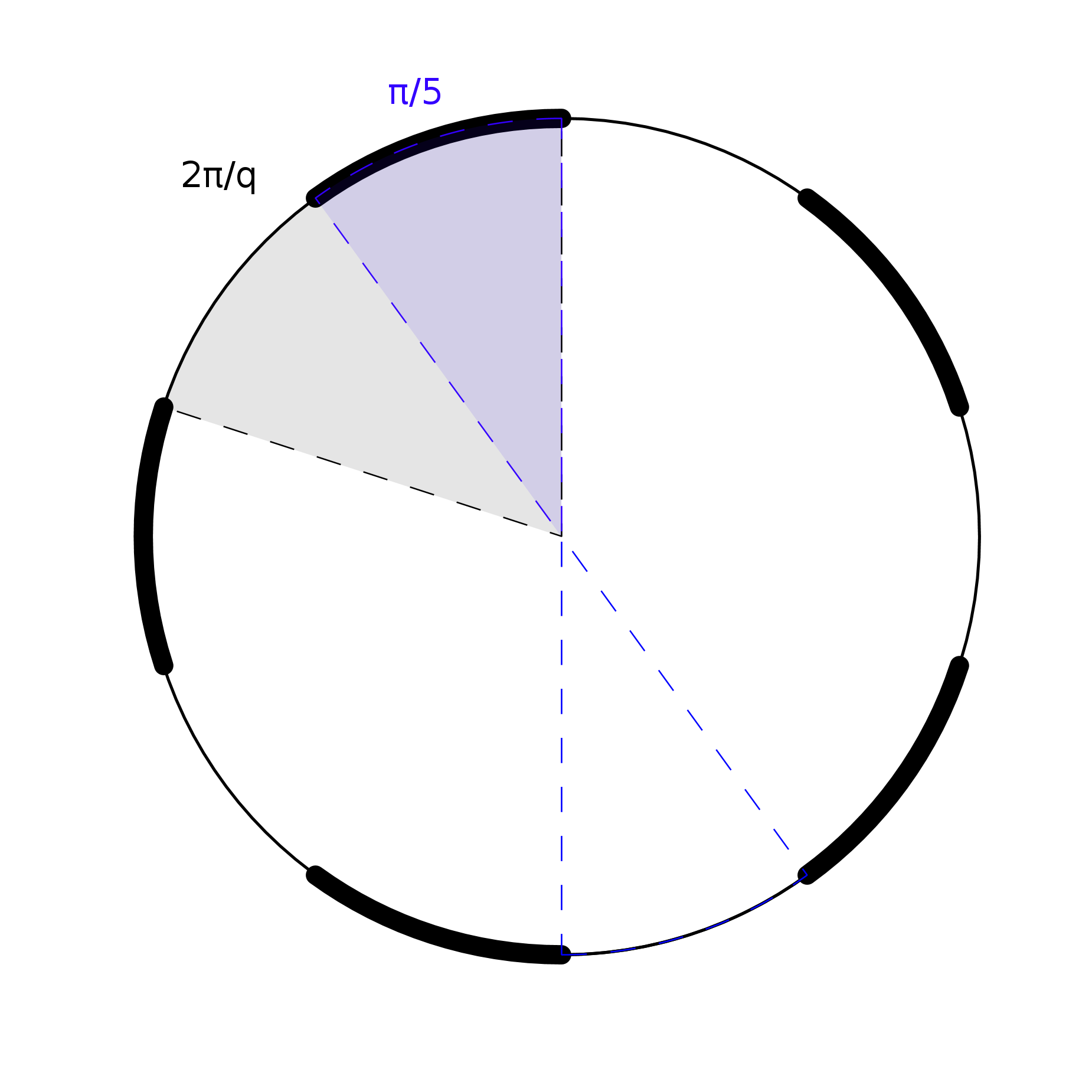}
\caption{\label{fig:antipodal_rational_even}Sketch of an optimal
  antipodal lawn $S_{\pi,q}$ for a rational jump $\cjump=\pi\frac{p}{q}$
  with even numerator $p=2$ and $q=5$.}
\end{figure}

Now consider a lawn of length $\pi$ with jump $ \cjump = \pi \frac{p}{q} $,
where $(p,q)=1$ and $p$ is odd.   We take $q \geq 2$ here; we consider
the special case of $q=1$ (which for $p$ odd implies a jump through
angle $\pi$) separately below.  

\begin{lemma}[rational case, odd numerator]
The optimal retention probability is $(1- \frac{1}{q})$
for an antipodal lawn with  jump $ \cjump = \pi \frac{p}{q} $,
where $(p,q)=1$, $p$ is odd and $q \geq 2$.
\end{lemma}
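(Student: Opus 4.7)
My plan is to reduce the continuous problem to a finite combinatorial one by decomposing $\mathbb{S}^1$ into the $2q$ ``cells'' $C_k = [k\pi/q, (k+1)\pi/q)$ for $k \in \mathbb{Z}/2q$. Writing $\theta = x + k\pi/q$ with offset $x \in [0, \pi/q)$, I define $c_k(x) := f(x + k\pi/q) \in \{0,1\}$. Because $p$ is odd and $(p,q)=1$ one has $\gcd(p,2q)=1$, so the jump by $\cjump = \pi p/q$ acts as $k \mapsto k+p$ on cell indices, and since $\pi = q\cdot \pi/q$ the antipodal map acts as $k \mapsto k+q$. Antipodality of the lawn therefore reads $c_{k+q}(x) = 1 - c_k(x)$ for a.e.\ $x$, and Fubini rewrites the retention functional as
\begin{equation*}
2\pi\, \continuousp{f}{\cjump} \;=\; \int_0^{\pi/q} \sum_{k \in \mathbb{Z}/2q} c_k(x)\bigl[c_{k+p}(x) + c_{k-p}(x)\bigr]\, dx \;=\; 2\int_0^{\pi/q} N(c(x))\, dx,
\end{equation*}
where $N(c) := \sum_k c_k c_{k+p}$. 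Hence a pointwise-in-$x$ bound on $N$ will suffice.

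The combinatorial core I would prove next is that every $c \in \{0,1\}^{2q}$ satisfying $c_{k+q} = 1-c_k$ has $N(c) \le q-1$. Since $\gcd(p,2q)=1$, the substitution $\tilde c_j := c_{jp \bmod 2q}$ is a bijective reindexing of $\mathbb{Z}/2q$ turning the $p$-shift into the ordinary cyclic shift, so $N(c) = \sum_j \tilde c_j \tilde c_{j+1}$ simply counts cyclically adjacent $11$-pairs. The antipodal constraint is preserved because $qp \equiv q \pmod{2q}$ (this step is exactly where $p$ odd enters), so $\tilde c$ still has $q$ ones and $q$ zeros. If the ones form $r \ge 1$ maximal cyclic runs of lengths $\ell_1,\ldots,\ell_r$ summing to $q$, then the number of $11$-edges is $\sum_i(\ell_i-1) = q-r \le q-1$, with equality iff $r=1$. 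Integrating the pointwise bound gives $\continuousp{f}{\cjump} \le 1 - 1/q$.

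For the matching lower bound I would exhibit the lawn
\begin{equation*}
S^\star \;=\; \bigcup_{j=0}^{q-1}\,\bigl[\,j\cjump,\; j\cjump + \pi/q\,\bigr) \pmod{2\pi},
\end{equation*}
whose reindexed profile $\tilde c$ is the single contiguous block $\{0,1,\ldots,q-1\}$. This immediately makes $S^\star$ antipodal (its $q$ cells are disjoint from their $+q$-shifted antipodes) and places it at the $r=1$ equality case of the combinatorial lemma, giving $\continuousp{S^\star}{\cjump} = 1 - 1/q$. The main obstacle is the orbit bookkeeping rather than any hard analytic estimate: one must verify carefully that ``$p$ odd and $(p,q)=1$'' is precisely what yields both $\gcd(p,2q)=1$ (so that the $p$-shift on $\mathbb{Z}/2q$ is a single $2q$-cycle) and $qp \equiv q \pmod{2q}$ (so that the antipodal involution commutes with the reindexing); once these two facts are nailed down, the upper bound, the antipodality of $S^\star$, and its retention value are all routine.
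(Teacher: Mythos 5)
Your proposal is correct and is essentially the paper's argument in different packaging: your fiber-over-offset / cell decomposition with the count $N(c)\le q-1$ of cyclically adjacent $11$-pairs is the same as the paper's observation that each orbit $\{\cphase+k\pi p/q\}_{k=0}^{2q-1}$ contains exactly $q$ lawn points and that a chain of $q$ jumps reaches the antipode, forcing at least one escaping jump in each direction; your $S^\star$ is exactly the paper's optimal lawn $S_{\pi,p,q}$. The only thing the paper's proof does in addition is extend the upper bound to fractional lawn densities $f(\cphase)\in[0,1]$ with $f(\cphase)+f(\cphase+\pi)=1$ (needed for the subsequent theorem) by noting that the retention functional restricted to one orbit is multilinear in the $q$ free densities and hence maximized at a $\{0,1\}$ vertex — an observation your cell decomposition would accommodate verbatim.
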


\begin{proof} 
First, consider lawns defined by indicator functions $f$ with 
$f ( \mathbf{\cphase} ) \in \{ 0, 1 \}$.
Consider any starting point $\cphase$ and the set of 
points $\{ \cphase_k  = \cphase + k (  \pi \frac{p}{q} ) : 0 \leq k \leq
(2q -1) \}$.
These points are all distinct, and the points 
$\cphase_k , \cphase_{k + q}$ are antipodal, where the sum is
taken modulo $2q$.  Hence any antipodal lawn will contain precisely
$q$ of these $2q$ points. 
A sequence of $q$ jumps in either direction takes a point on
the lawn to the antipodal point, which must be off the lawn.   
Thus, whichever $q$ points are on the lawn, a clockwise jump
from at least one of them takes the grasshopper off the lawn, 
and similarly for anti-clockwise jumps. 
Because this is true for all such sets of lawn points, the probability
of leaving the lawn is at least $\frac{1}{q}$, i.e. the retention
probability is at most $1 - \frac{1}{q}$. 

For a general indicator function, consider again  
the discrete set of 
points $D_{\cphase,\frac{p}{q}}=\{\cphase_k=\cphase + k(\pi\frac{p}{q}):0\leq k\leq(2q -1)\}$, and suppose
$f(\cphase_k)=p_k$, with $p_k+p_{k+q}=1$ and 
$0\leq p_k\leq 1$.   
Suppose the grasshopper first lands at a point in $D_{\cphase,\frac{p}{q}}$.   It then jumps to another point in 
$D_{\cphase,\frac{p}{q}}$.
In what follows we take all probabilities conditioned on the first landing being in  
$D_{\cphase,\frac{p}{q}}$. 

Let $p_k$ be the probability of first landing at $\cphase_k$,
so we have $p_k + p_{k+q}=1$ since the lawn is antipodal.   
The retention probability is 
\begin{eqnarray*}
\frac{1}{2q}\sum_{k=0}^{2q-1} p_k(p_{k-1}+p_{k+1})&=&\frac{1}{q}\sum_{k=0}^{2q-1}p_k p_{k+1}=\frac{1}{q}\sum_{k=0}^{q-1}\left( p_k p_{k+1}+p_{k+q}p_{k+q+1}\right)\\
&=&\frac{1}{q}\left(p_{q-1}(1-p_0)+(1-p_{q-1})p_0+\sum_{k=0}^{q-2}\left( p_k p_{k+1}+(1-p_k)(1-p_{k+1})\right)\right) \,  .
\end{eqnarray*} 
The retention probability as a function of $p_0,\ldots ,p_{q-1}$ is linear in 
each variable, therefore its maximum value is attained for some choice 
of the extreme values $0$ or $1$ for each variable. This consideration  
returns us to the case with indicator functions previously discussed.

In the case $p=1$, the semi-circular lawn 
\begin{equation}
S_{{\rm semi}} = \left[ 0 , \pi \right ) , 
\end{equation}
which has retention probability $1 - \frac{1}{q}$, is thus optimal.    

For general odd $p$, the lawn
\begin{equation}
S_{\pi,p,q} = \bigcup_{j=0}^{q-1} \left[  j  \pi \frac{p}{q} , j \pi
  \frac{p}{q} + \frac{\pi}{q} \right) 
\end{equation}
is antipodal and has retention probability $1 - \frac{1}{q}$, 
and is thus optimal.    
\end{proof}

There are also other optimal lawns that are not rotated versions of
these.   
For general $p$, we can construct a lawn with the maximal retention probability as follows. 
\begin{equation}
S'_q = \bigcup_{j=0}^{q -1 } \left[  \pi \frac{jp}{q} , \pi
  \frac{jp}{q} + \frac{\pi}{2q} \right) \cup 
\bigcup_{j=0}^{q -1 } \left[  -  \pi
  \frac{jp}{q} - \frac{\pi}{2q} , -  \pi \frac{jp}{q} \right) \, .
\end{equation}
Here the two unions define disjoint ``demi-lawns'' of length
$\frac{\pi}{2}$. Several examples are plotted in Figure~\ref{fig:antipodal_rational_odd}.
\begin{figure}
\includegraphics[clip, trim=3cm 4mm 4cm 3mm, width=0.33\textwidth]{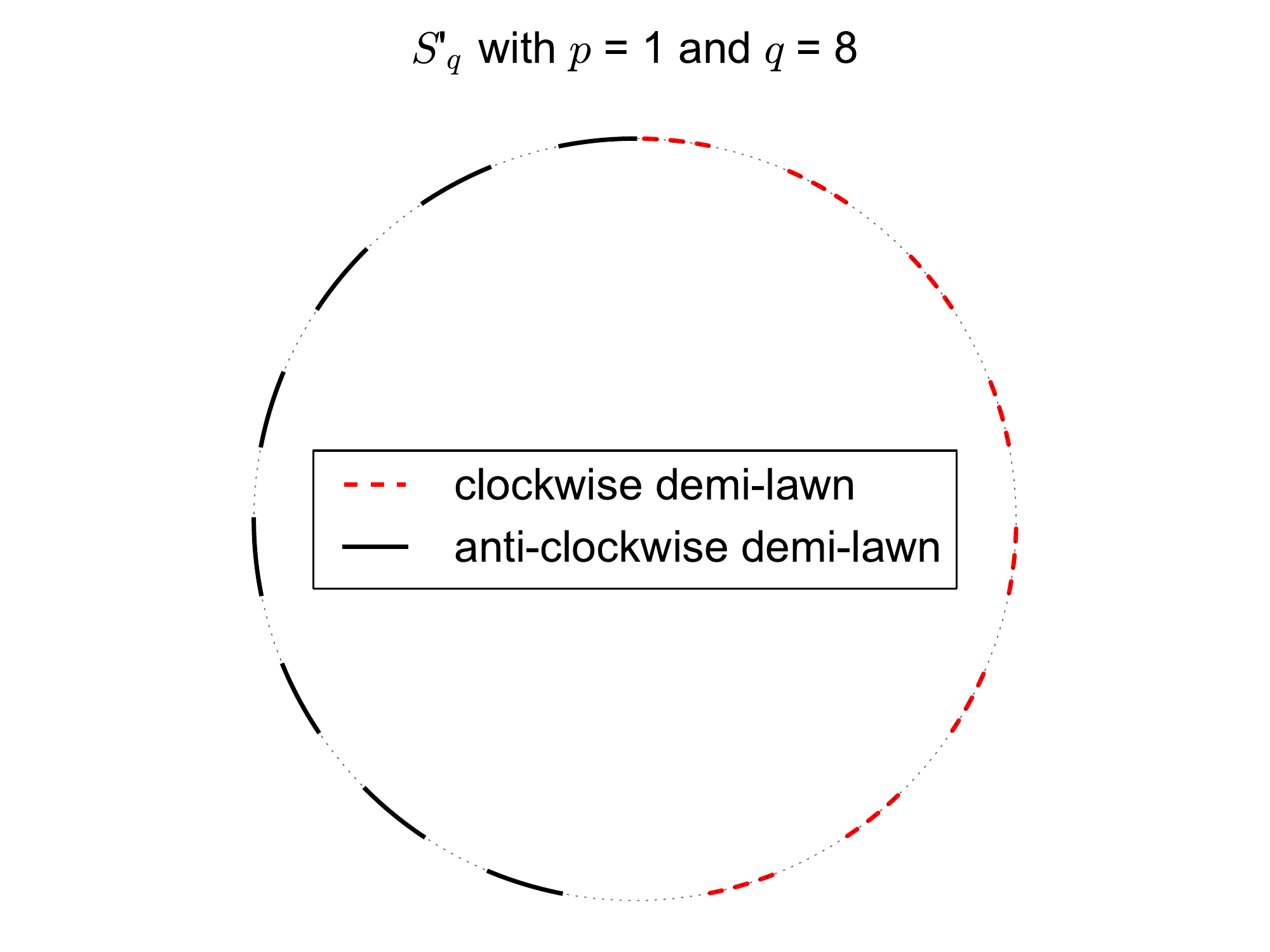}\hfill
\includegraphics[clip, trim=3cm 4mm 4cm 3mm, width=0.33\textwidth]{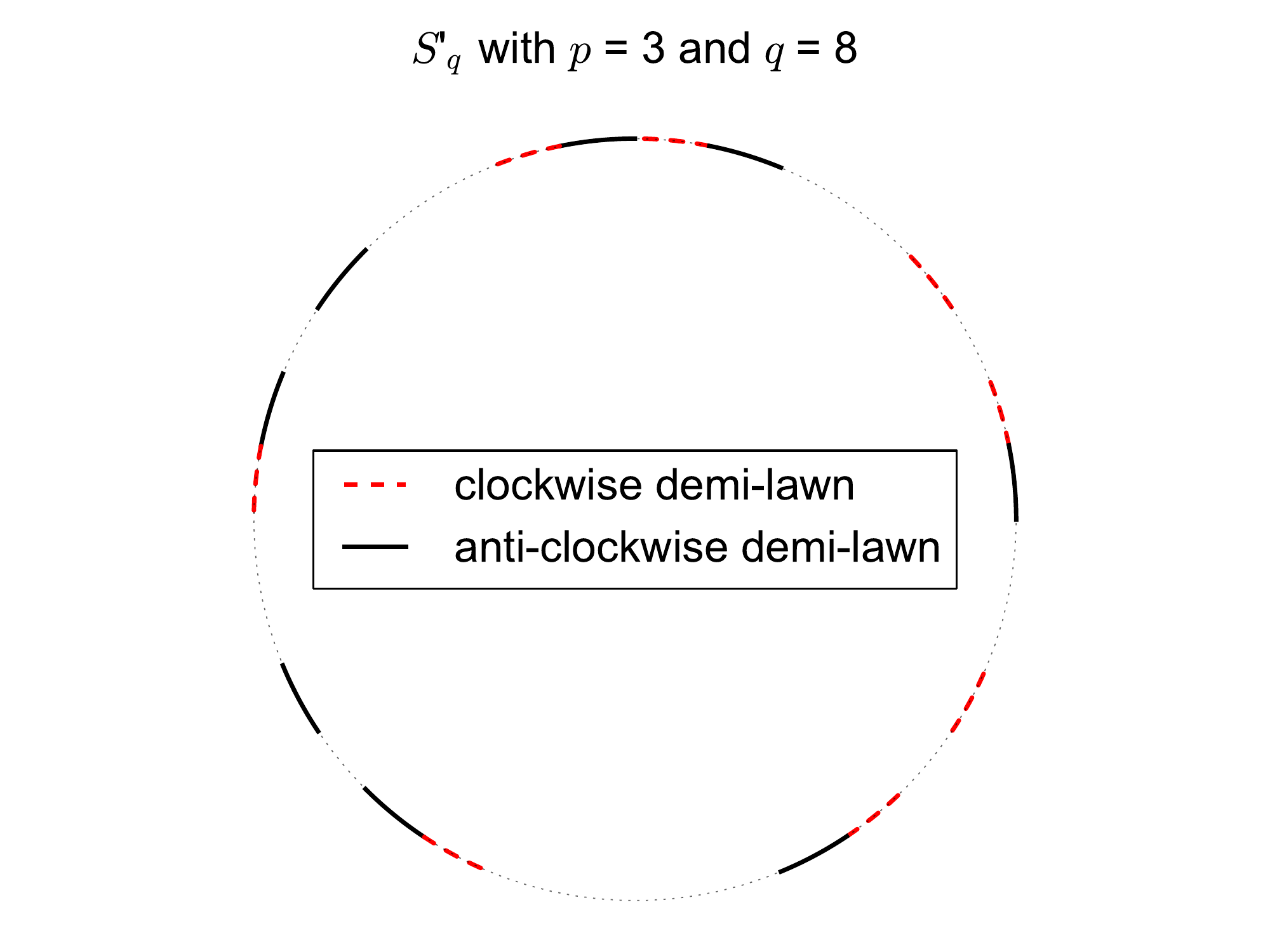}\hfill
\includegraphics[clip, trim=3cm 4mm 4cm 3mm, width=0.33\textwidth]{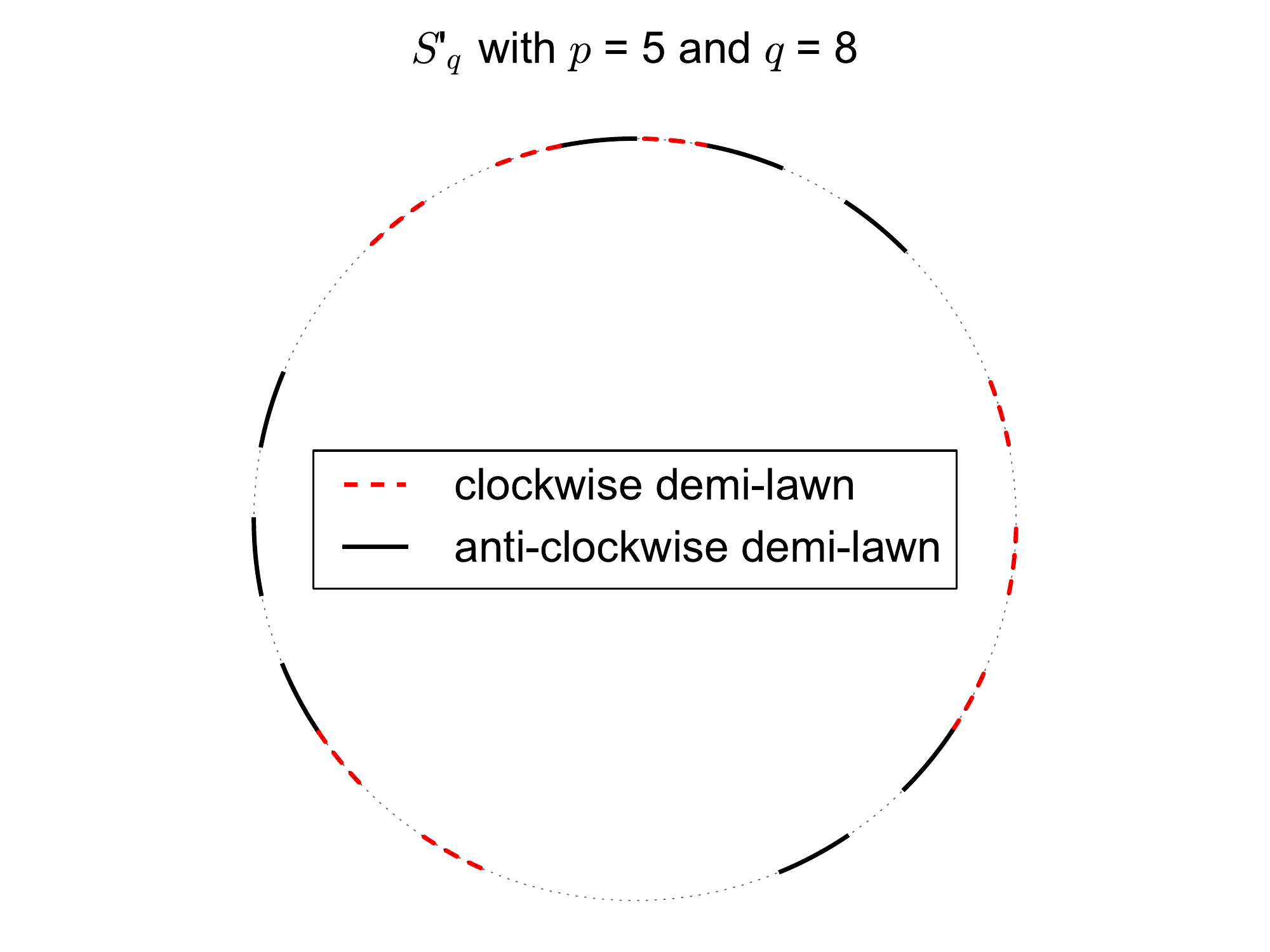}
\caption{\label{fig:antipodal_rational_odd}Sketch of an optimal
  antipodal lawn $S'_{q}$ for a rational jump $\cjump=\pi\frac{p}{q}$
  with odd numerator $p=1$ (left), $p=3$ (middle), $p=5$ (right), and
  (in all panels) $q=8$. The two demi-lawns are marked by different
  line styles (black solid lines and red dashed lines).}
\end{figure}

The special case of a jump through angle $\pi$ takes any point to
its antipodal point. In the case when the indicator function is restricted to 
$f ( \mathbf{\cphase} ) \in \{ 0, 1 \}$, all antipodal lawns have probability 0. 
However, a higher probability can be attained for general indicator functions.
\begin{lemma}[jump through angle $\pi$]
The optimal retention probability is $\frac{1}{2}$ for an antipodal lawn with jump $\cjump=\pi$.\footnote{We thank Carlo 
Piosevan for noting this special case.} 
\end{lemma}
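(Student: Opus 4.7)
The plan is to exploit the antipodal condition to turn the retention probability into a pointwise-optimizable integral of $f(1-f)$, and then simply note that the constant density $f\equiv 1/2$ saturates the pointwise bound.

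First I would substitute $\cjump=\pi$ into the definition~(\ref{successprob}). Since $\cphase+\pi$ and $\cphase-\pi$ coincide modulo $2\pi$, both summands in the integrand are equal, so
\begin{equation*}
\continuousp{f}{\pi} \;=\; \frac{1}{2\pi}\int_{0}^{2\pi} 2\,f(\cphase)\,f(\cphase+\pi)\,d\cphase.
\end{equation*}
Invoking the antipodal condition $f(\cphase+\pi)=1-f(\cphase)$ reduces this to
\begin{equation*}
\continuousp{f}{\pi} \;=\; \frac{1}{\pi}\int_{0}^{2\pi} f(\cphase)\bigl(1-f(\cphase)\bigr)\,d\cphase.
\end{equation*}

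Next I would apply the elementary pointwise bound $t(1-t)\le 1/4$ for $t\in[0,1]$, which is valid for every value of $f(\cphase)$. Integrating gives
\begin{equation*}
\continuousp{f}{\pi} \;\le\; \frac{1}{\pi}\cdot 2\pi\cdot \frac{1}{4} \;=\; \frac{1}{2}.
\end{equation*}
Finally I would exhibit an antipodal lawn that attains this bound: the constant density $f\equiv 1/2$. It takes values in $[0,1]$, satisfies the antipodal condition $1/2 = 1-1/2$, has total mass $\int_0^{2\pi}\tfrac{1}{2}\,d\cphase=\pi$ as required, and plugging into the formula above yields retention probability exactly $1/2$.

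There is essentially no obstacle here once general $[0,1]$-valued densities are permitted; the only point requiring mild care is verifying that the constant density is a legitimate antipodal lawn, which it is. The contrast with the $\{0,1\}$-valued case is illuminating: indicator functions must jump from $0$ to $1$ on antipodal points, hence $f(\cphase)f(\cphase+\pi)\equiv 0$ and the retention probability is forced to be~$0$, whereas relaxing to densities allows $f(\cphase)(1-f(\cphase))$ to be uniformly maximized at $1/4$.
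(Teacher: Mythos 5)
Your proof is correct and follows essentially the same approach as the paper's: both reduce the problem to maximizing $p(1-p)$ pointwise over antipodal pairs and observe that the uniform density $1/2$ attains the bound. Your version simply writes out the integral manipulation more explicitly.
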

\begin{proof}
   Since antipodal points have lawn density
$p$ and $(1-p)$ for some $p \in \left[ 0 , 1 \right]$, the optimal
density for each pair is given by $p=\frac{1}{2}$, which
maximizes $p(1-p)$.   The optimal lawn in this case thus
has uniform density $\frac{1}{2}$ and the retention probability is $\frac{1}{2}$. 
\end{proof}

\begin{lemma}[irrational case]
The supremum retention probability is $1$ for antipodal lawns 
with jump angle $\cjump =  \pi x$, when $x$ is
irrational.
\end{lemma}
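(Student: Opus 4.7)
My plan is to approximate the irrational jump $\cjump = \pi x$ by rational jumps with odd denominator and to re-use the lawns $S_{\pi,q}$ introduced in the even-numerator rational case. The lawn $S_{\pi,q} = \bigcup_{j=0}^{q-1}[2j\pi/q,(2j+1)\pi/q)$ has length $\pi$, and when $q$ is \emph{odd} it is antipodal, because the shift $\theta\mapsto\theta+\pi=\theta+q\cdot(\pi/q)$ carries each ``on'' arc to an adjacent ``off'' arc. Since $S_{\pi,q}$ is periodic with period $T = 2\pi/q$, any jump that is an integer multiple of $T$ keeps the grasshopper on the lawn, yielding retention probability exactly $1$.

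I would then make this quantitative via a short autocorrelation calculation for $1_{[0,T/2)}$ over one period: for $\cjump$ within $T/2$ of the nearest integer multiple $\cjump^* = 2\pi p/q$ of $T$, the retention probability of $S_{\pi,q}$ equals $1 - q\,|\cjump-\cjump^*|/\pi = 1 - |qx - 2p|$.

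The key step is then to produce arbitrarily good rational approximations to $x$ with odd denominator $q$ and even numerator $2p$. Writing $y = x/2$, this reduces to showing $\liminf_{q \text{ odd}} \|qy\| = 0$, where $\|\cdot\|$ denotes distance to the nearest integer. I would derive this from Weyl's equidistribution theorem: since $x$ is irrational, so is $2y = x$, and therefore the sequence $\{(2k+1)y \bmod 1\} = \{y + 2ky \bmod 1\}_{k\geq 0}$ is equidistributed on $[0,1)$. In particular it accumulates at $0$, furnishing odd $q_n\to\infty$ and integers $p_n$ with $|q_n x - 2p_n|\to 0$.

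Combining these ingredients: given any $\epsilon>0$, choose odd $q$ and integer $p$ with $|qx - 2p|<\epsilon$; then the antipodal lawn $S_{\pi,q}$ of length $\pi$ has retention probability at least $1-\epsilon$ for jump $\cjump$, so the supremum is $1$. The main subtlety is the ``$q$ odd'' restriction forced by antipodality; without it, a direct appeal to Dirichlet's theorem (as in the general-lawn irrational lemma) would suffice, and the equidistribution-along-odd-integers refinement above is the minimal upgrade needed.
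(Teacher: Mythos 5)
Your proposal is correct and its core is the same as the paper's: both proofs approximate the irrational $x$ by rationals with even numerator and odd denominator and then deploy the antipodal lawn $S_{\pi,q}$, whose retention probability for jump $\pi x$ is $1$ minus $q/\pi$ times the distance from $\pi x$ to the nearest multiple of $2\pi/q$ (your autocorrelation formula $1-|qx-2p|$ agrees with the paper's leaving probability $q\,|x-p/q|$ after accounting for the factor of $2$ in the numerator). The only genuine difference is the number-theoretic input. The paper invokes Hurwitz's theorem, or more efficiently the Uchiyama--Elsner refinement guaranteeing infinitely many $p/q$ with $p$ even, $q$ odd and $|x-p/q|\le 1/q^2$, which yields the quantitative bound $1/q$ on the leaving probability. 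You instead apply Weyl equidistribution to the sequence $(2k+1)y \bmod 1$ with $y=x/2$ (equidistributed because it is a constant shift of $\{kx \bmod 1\}$ and $x$ is irrational), extracting odd $q_n$ with $\|q_n y\|\to 0$. This is logically sufficient and arguably more elementary --- you need only $|qx-2p|<\epsilon$, not a $1/q^2$ rate, and you avoid citing a parity-refined Diophantine approximation theorem --- at the cost of losing the explicit rate at which the retention probability approaches $1$ as a function of $q$. Both routes are sound; your reduction of the parity constraint to equidistribution along odd multiples is the minimal extra ingredient, exactly as you say.
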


\begin{proof}
By Hurwitz's theorem \cite{hurwitz1891angenaherte}
there are infinitely
many rationals $\frac{p}{q}$ such that
\begin{equation}
\left| x - \frac{p}{q} \right| \leq \frac{1}{\sqrt{5} q^2 } \, .
\end{equation}
We could apply this result directly, considering separately
the possibilities that $p$ is odd or even. 
A slightly shorter argument follows from an extension of
Hurwitz's theorem due to Uchiyama \cite{uchiyama1980rational}, 
elaborated by Elsner \cite{elsner1996approximation}.  This implies that
for any irrational $x$ there are infinitely many rationals 
$\frac{p}{q}$ with $p$ even and $q$ odd such that
\begin{equation}
\left| x - \frac{p}{q} \right| \leq \frac{1}{q^2 } \, .
\end{equation}
Let $\frac{p}{q}$ be a rational approximation to $x$ of
this type.    For jump $x$, the lawn $S_{\pi,q}$ has leaving probability
\begin{equation}
q \left| x - \frac{p}{q} \right|   \leq   \frac{1}{q} \, .
\end{equation}
By considering an infinite sequence of approximations with increasing $q$, we can
make this arbitrarily small.    Hence the 
supremum retention probability is $1$. 
\end{proof}

Combining the lemmas for antipodal lawns we obtain
\begin{theorem}
For antipodal lawns with general indicator functions, the supremum retention probability is 1, except for jumps of the form $\cjump=\pi\frac{p}{q}$, where $(p,q)=1$ and $p$ is odd. In the latter case, it is $1/2$ for $q=1$ and $1-\frac{1}{q}$ otherwise.
\end{theorem}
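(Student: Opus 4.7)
The plan is to assemble the theorem by case analysis on the jump angle $\cjump$, invoking in turn the four lemmas already proved. First I would observe that either $\cjump/\pi$ is irrational or it admits a unique representation $\cjump = \pi \frac{p}{q}$ with $(p,q)=1$ and $q \geq 1$ (and $0 \leq p < 2q$). These two possibilities, together with the parity of $p$ and the value of $q$, partition all admissible jump angles into four mutually exclusive and exhaustive sub-cases that correspond exactly to the four lemmas.

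In the irrational case, the lemma on irrational jumps yields supremum $1$ directly. In the rational case with $p$ even (forcing $q$ odd since $(p,q)=1$), the even-numerator lemma gives supremum $1$, realised by $S_{\pi,q}$. In the rational case with $p$ odd, one further splits on whether $q=1$ (i.e.\ $\cjump=\pi$) or $q\geq 2$: the $q=1$ lemma gives supremum $\frac{1}{2}$ via the uniform half-density indicator, while the $q\geq 2$ odd-numerator lemma gives supremum $1-\frac{1}{q}$, realised for instance by $S_{\pi,p,q}$ or $S'_q$. Collecting these four statements reproduces the theorem verbatim.

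The only step that requires any care is verifying that the case split is genuinely exhaustive and non-overlapping, in particular that the reduction $\cjump = \pi p/q$ with $(p,q)=1$ is well-defined modulo $2\pi$ and that the parity of $p$ is then unambiguous; this is immediate because $(p,q)=1$ fixes the pair up to simultaneous sign. No real obstacle is anticipated, since all the analytic content has been absorbed into the preceding lemmas.
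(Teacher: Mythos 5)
Your proposal is correct and coincides with the paper's own treatment: the theorem is obtained there simply by combining the four preceding lemmas (irrational jump; rational with even numerator; rational with odd numerator and $q\geq 2$; the special case $\cjump=\pi$), exactly the case split you describe. Your additional remark that the representation $\cjump=\pi\frac{p}{q}$ with $(p,q)=1$ makes the parity of $p$ well-defined (also under the symmetry $\cjump\mapsto 2\pi-\cjump$) is a sensible check that the paper leaves implicit.
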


\subsection{Two antipodal lawns}

We now consider the case of two lawns, $S^A$ and $S^B$, 
where the grasshopper starts on one and we are interested
in optimizing the probability that it lands on the other. 
In versions of the problem where the supremum probability
is $1$ for a single lawn, we can obviously obtain this 
supremum by taking $S^A = S^B$.    
The only case that remains of interest is thus two
antipodal (thus length $\pi$) lawns with jump $ \cjump = \pi \frac{p}{q} $,
where $(p,q)=1$ and $p$ is odd.   

\begin{lemma}  
The optimal retention probability for two antipodal lawns
is $(1 - \frac{1}{q})$ when the jump $ \cjump = \pi \frac{p}{q} $,
where $(p,q)=1$, $p$ is odd and $q$ is even.
\end{lemma}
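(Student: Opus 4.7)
The plan is to combine an orbit-by-orbit upper bound with a construction recycling the single-lawn optimum. I first parametrise $[0,2\pi)$ by orbits of translation by $\cjump$: since $p$ is odd and $(p,q)=1$ we have $\gcd(p,2q)=1$, so each orbit consists of $2q$ points $\theta_0+k\cjump$ with $k\in\{0,\dots,2q-1\}$, and $[0,\pi/q)$ is a fundamental domain. Writing $a_k=f^A(\theta_0+k\cjump)$ and $b_k=f^B(\theta_0+k\cjump)$, antipodality becomes $a_{k+q}=1-a_k$ and $b_{k+q}=1-b_k$, and the retention decomposes as
\begin{equation*}
\continuousp{f^A,f^B}{\cjump}=\frac{1}{2\pi}\int_0^{\pi/q}\sum_{k=0}^{2q-1}a_k(b_{k+1}+b_{k-1})\,d\theta_0.
\end{equation*}
The same linearity argument used in the odd-numerator single-lawn lemma reduces the problem to $\{0,1\}$-valued antipodal sequences orbit by orbit.

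Pairing $k$ with $k+q$ via antipodality contracts $\sum_{k=0}^{2q-1}a_kb_{k+1}$ to $N_+:=|\{k\in[0,q-1]:a_k=b_{k+1}\}|$, and similarly $\sum a_kb_{k-1}=N_-$; the target becomes $N_++N_-\le 2q-2$. Since $N_\pm\le q$, the only dangerous case (up to the symmetry $k\mapsto-k$ swapping $N_+$ and $N_-$) is $N_+=q$, i.e.\ $a_k=b_{k+1}$ for every $k$, in which case $N_-$ counts $k\in[0,q-1]$ with $b_{k+1}=b_{k-1}$.

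The crucial step, and the one I expect to be the main obstacle, is ruling out $N_-\ge q-1$ in this situation. I would split the $q$ equations $b_{k+1}=b_{k-1}$ according to the parity of $k$: the odd-$k$ equations chain the even subscripts into $b_0=b_2=\cdots=b_q$, while the even-$k$ equations chain the odd subscripts into $b_{-1}=b_1=\cdots=b_{q-1}$. Since $q$ is even, antipodality gives $b_q=1-b_0$ and $b_{-1}=b_{2q-1}=1-b_{q-1}$, so each full chain is internally contradictory; hence at least one equation of each parity must fail, giving $N_-\le q-2$ and thus $N_++N_-\le 2q-2$. Integrating this per-orbit bound over $[0,\pi/q)$ produces retention $\le 1-1/q$. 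For matching achievability I would simply take $f^A=f^B=S_{\pi,p,q}$ from the preceding lemma, whose single-lawn (hence two-lawn) retention equals $1-1/q$. The parity step is precisely where $q$ even is essential: for $q$ odd the two chains interleave under antipodality and admit period-$2$ solutions, consistent with the two-lawn optimum jumping to $1$ in that regime.
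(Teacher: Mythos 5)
Your proof is correct, and its outer architecture matches the paper's: decompose into jump-orbits, reduce to $\{0,1\}$-valued lawns by multilinearity, prove a per-orbit combinatorial bound, and achieve $1-\frac1q$ by taking $S^A=S^B$ equal to a single-lawn optimum. The combinatorial core, however, is executed differently. The paper writes $q=2q'$ and splits each orbit into the two interleaved $2q'$-point sets $A_{\cphase}$ and $B_{\cphase}$ (orbits of the doubled jump $2\cjump$); it then observes that a chain of $2q'$ clockwise jumps starting on $S^A$ alternates between these sets and terminates at the antipode of its start, so some link fails, and it uses antipodality to reflect a failing $S^B\to\overline{S^A}$ link into a failing \emph{anticlockwise} $S^A\to\overline{S^B}$ jump; this yields one failing directed jump per $2q'$ directed jumps, i.e.\ leaving probability at least $\frac1q$. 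You instead work on the full $2q$-point orbit, contract the two directed sums to the match-counts $N_\pm\le q$, and show that $N_+=q$ forces $N_-\le q-2$ because the relations $b_{k+1}=b_{k-1}$ split by the parity of $k$ into two chains, each of which closes onto an antipodal identification ($b_q=1-b_0$, $b_{2q-1}=1-b_{q-1}$). Both arguments rest on the same mechanism---traversing the orbit lands you on an antipode---but yours avoids the paper's reflection step, quantifies directly that at least two of the $2q$ directed jumps per orbit must fail (numerically the same bound as the paper's one-per-suborbit count), and isolates exactly where evenness of $q$ enters: for odd $q$ your two parity chains merge into one that a period-two colouring can satisfy, consistent with the optimum being $1$ in that case. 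I see no gaps; the only point worth spelling out in a final write-up is that the multilinearity reduction is legitimate because the normalisation $L=\pi$ is density-independent for antipodal lawns.
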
 
\begin{proof}
We have that $2 \cjump = \pi \frac{p}{q'}$, $2q' = q$ and $(p,q')=1$.
Again, we first consider the case in which
the lawn densities take only values $0$ or $1$.
Any such lawn is defined by the set $S= \{ \cphase \, : \, f (\cphase ) = 1
\, \}$ .  Define the complementary lawn 
$$ 
\overline{S} = \left[ 0 , 2 \pi \right ) \setminus S \, .
$$
If $S$ is antipodal, then so is $\overline{S}$, since 
$\overline{S}$ is the set of points antipodal to points in $S$. 

Consider any starting point $\cphase$ and the sets of 
points 
$$A_{\cphase} = \{ \cphase^A_k  = \cphase + k (  \pi \frac{p}{q'} ) : 0 \leq k \leq
(2q' -1) \} \, , $$
$$B_{\cphase} = \{ \cphase^B_k  = \cphase + (k+ \frac{1}{2} ) (  \pi \frac{p}{q'} ) : 0 \leq k \leq (2q' -1) \} \, . $$ 
These points are all distinct.   The points 
$\cphase^A_k , \cphase^A_{k + q'}$ are antipodal, as are the points
$\cphase^B_k , \cphase^B_{k + q'}$, where the subscript sums are 
taken modulo $2q'$.  Hence the antipodal lawns $S^A , S^B$ contain precisely
$q'$ of the $2q'$ points $A_{\cphase} , B_{\cphase}$ respectively. 
Starting at a point on $S^A$, a sequence of $2q'$ jumps clockwise through $\cjump$ 
reaches the antipodal point, which must be off the lawn.   
Thus at least one of these jumps links a point on $S^A$ to a point off
$S^B$ or a point on $S^B$ to a point off $S^A$.    
In the first case, there is at least one clockwise jump from a point
in $A_{\cphase} \cap S^A$ to 
$B_{\cphase} \cap \overline{S^B}$.   In the second case, there is at least
one clockwise jump from $B_{\cphase} \cap S^B$ to $A_{\cphase} \cap
\overline{S^A}$. 
As the lawns are antipodal, this implies at least one anticlockwise
jump from $A_{\cphase} \cap S^A$ to $B_{\cphase} \cap \overline{S^B}$. 
Thus, whichever $q'$ points lie in 
$A_{\cphase} \cap S^A$, there is at least one jump from one of them
in one direction that takes the grasshopper to $\overline{S^B}$, 
Because this is true for all such sets of lawn points, the probability
of leaving the lawn is at least $\frac{1}{2q'}$, i.e. the retention
probability is at most $1 - \frac{1}{2q'} = 1 - \frac{1}{q}$. 

For the general case with unrestricted density functions, we
can argue as before.  If lawn $A$ has a point $\cphase$ 
such that $f_A (\cphase) = 1 - f_A (\cphase+ \pi) =p$, where $0 < p <1$,
then at least one of the choices  $p=0$ and $p=1$ 
does not decrease the retention probability.  
The same argument holds for lawn $B$.  We can thus restrict without
loss of generality to lawns with density $0$ or $1$ everywhere.

In the case $p=1$, where $\cjump = \pi \frac{1}{2q'} = \pi \frac{1}{q}$,
we can attain this retention probability by 
taking $S^A = S^B = \left[ 0 , \pi \right) $ (i.e. identical semicircular
lawns). A jump from $S^A$ clockwise remains on $S^A$
unless it begins in the segment $\left[ \pi ( 1 - \frac{1}{q} ) , \pi \right)$;
similarly an anticlockwise jump remains on $S^A$ unless it 
begins in the segment $\left[ 0 , \pi \frac{1}{q} \right)$.  

For general $p$ odd, we can construct a lawn with the same 
retention probability by taking $S_A = S_B = S'_q$, where
$S'_q$ is the antipodal lawn defined above, which has
leaving probability $\frac{1}{q}$.   
\end{proof}

\begin{lemma}
The optimal retention probability for two antipodal lawns
is $1$ when the jump $ \cjump = \pi \frac{p}{q} $,
where $(p,q)=1$, $p$ is odd and $q$ is odd. 
\end{lemma}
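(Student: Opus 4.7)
The plan is to exhibit an explicit pair of antipodal indicator lawns $S^A, S^B$ for which every $\pm\cjump$-jump from $S^A$ lands inside $S^B$, giving retention probability exactly $1$; since retention probabilities never exceed $1$, this is automatically optimal. The natural candidate is to take $S^A = S_{\pi,q}$, the union of the ``even'' arcs among the $2q$ arcs of length $\pi/q$ obtained by cutting the circle at the multiples of $\pi/q$, together with $S^B = \overline{S^A}$, the union of the ``odd'' arcs.

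First I would verify antipodality: a point $\cphase \in [0,2\pi)$ lies in $S^A$ precisely when $\lfloor q\cphase/\pi\rfloor$ is even, and its antipode $\cphase + \pi$ satisfies $\lfloor q(\cphase+\pi)/\pi\rfloor = \lfloor q\cphase/\pi\rfloor + q$; because $q$ is odd this flips parity, so $\cphase + \pi \in S^B$. Hence both $S^A$ and $S^B = \overline{S^A}$ are antipodal. Next I would show the key identity $S^A \pm \cjump = S^B$: translating the $j$-th arc of $S^A$ by $\cjump = \pi p/q$ gives $[\pi(2j+p)/q,\, \pi(2j+p+1)/q)$, whose left endpoint is an odd multiple of $\pi/q$ since $p$ is odd; and as $j$ ranges over $0,\ldots,q-1$ the residues $2j+p\pmod{2q}$ exhaust all $q$ odd residues, so $S^A + \cjump$ covers every arc with odd-indexed left endpoint, i.e.\ $S^A + \cjump = S^B$. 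Replacing $p$ by $-p$ in the same calculation yields $S^A - \cjump = S^B$.

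It then follows at once that a grasshopper starting uniformly on $S^A$ lands in $S^B$ with certainty whether it jumps clockwise or anticlockwise, so $\continuousp{f^A,f^B}{\cjump} = 1$. No reduction from general density functions is required, because the universal upper bound $1$ is already attained by indicators. The only non-trivial ingredient is the residue computation $\{2j+p \pmod{2q} : 0 \le j < q\} = \{\text{odd residues mod } 2q\}$, which is immediate from $p$ odd together with $\gcd(2,2q)=2$, so I do not foresee any real obstacle; the subtlety of the statement lies entirely in contrasting this $q$-odd case with the $q$-even case treated in the preceding lemma, where the same shift $\pi p/q$ no longer swaps even and odd arcs of $S_{\pi,q}$ and the construction breaks down.
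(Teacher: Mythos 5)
Your proposal is correct and is essentially the paper's own proof: the paper likewise takes $S^A=S_{\pi,q}$ (the even arcs) and $S^B=\overline{S^A}$ (the odd arcs) and observes that the jump probability from $S^A$ to $S^B$ is $1$. You merely fill in the details the paper leaves implicit — the parity-flip argument for antipodality (using $q$ odd) and the residue computation showing $S^A\pm\cjump=S^B$ (using $p$ odd) — and both are correct.
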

\begin{proof}
We have $ \cjump = \pi \frac{p}{q}$.
Define the lawns 
$$ S_A = S_{\pi, q} = \bigcup_{j=0}^{q-1} \left[ \pi \frac{2j}{q} , \pi
  \frac{2j+1}{q} \right ) \,  $$ 
and 
$$ S_B = \overline{S_A} = \bigcup_{j=0}^{q-1} \left[ \pi \frac{2j+1}{q} , \pi
  \frac{2j+2}{q} \right ) \, . $$ 
These lawns are antipodal, and the jump probability from $S_A$ 
to $S_B$ for a jump of angle $\cjump$ is $1$, so this
lawn configuration is optimal.  
\end{proof}
 
Combining the lemmas for a pair of antipodal lawns we obtain
\begin{theorem}
For antipodal lawns the supremum retention probability is 1, except for jumps of the form $\cjump=\pi\frac{p}{q}$, where $(p,q)=1$, $p$ is odd and $q$ is even. In the latter case, it is $1-\frac{1}{q}$.
\end{theorem}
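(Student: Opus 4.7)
The plan is to dispatch each value of $\cjump$ via the appropriate preceding lemma, together with the observation that any single-antipodal-lawn lower bound transfers to the two-lawn setting by specialising to $f^A = f^B$. Indeed, substituting $f^A = f^B = f$ into~(\ref{successprobtwolawns}) recovers~(\ref{successprob}) on the nose, so every lawn (or sequence of lawns) that witnesses the single-lawn supremum also witnesses the two-lawn supremum at the same value. Since no probability can exceed $1$, this takes care of the matching upper bound whenever the target value is $1$.

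With that reduction in hand I would partition the jumps into four cases and match each to its source. First, $\cjump = \pi p/q$ with $(p,q)=1$, $p$ odd and $q$ even is exactly the content of the first of the two preceding lemmas and yields the exact value $1 - 1/q$, both as a lower bound (explicit construction) and as an upper bound (the arithmetic-progression argument). Second, $\cjump = \pi p/q$ with $(p,q)=1$ and both $p,q$ odd is the second preceding lemma, via the complementary pair $S^A = S_{\pi,q}$, $S^B = \overline{S_{\pi,q}}$; the degenerate case $q=1$, i.e.\ $\cjump = \pi$, collapses to any antipodal $S^A$ paired with its complement. Third, $\cjump = \pi p/q$ with $p$ even (hence $q$ odd) is already covered by the single-lawn antipodal theorem, since $S_{\pi,q}$ achieves retention probability $1$; I would simply take $S^A = S^B = S_{\pi,q}$. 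Fourth, for irrational $\cjump/\pi$, the single-lawn irrational-antipodal lemma produces, via the Uchiyama/Elsner refinement of Hurwitz, a sequence of antipodal lawns $S_{\pi,q}$ along rational approximants $p/q$ with $p$ even and $q$ odd, whose retention probabilities tend to $1$; reusing that sequence with $S^A = S^B$ yields the two-lawn supremum $1$.

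The main obstacle, such as it is, is essentially organisational: verifying that the case split exhausts $\cjump \in [0,2\pi)$, noting that~(\ref{successprobtwolawns}) genuinely reduces to~(\ref{successprob}) on the diagonal $f^A = f^B$, and confirming that each of the one-lawn constructions transports verbatim. No fresh Diophantine or combinatorial input is required beyond what the preceding lemmas already supply; the theorem is, in effect, a clean packaging of those four results.
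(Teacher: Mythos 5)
Your proposal is correct and matches the paper's route exactly: the paper also notes that any single-lawn construction achieving supremum $1$ transfers to the two-lawn setting via $S^A=S^B$, leaving only the $p$-odd rational jumps, which are then settled by the two preceding two-lawn lemmas (value $1-\frac{1}{q}$ for $q$ even, value $1$ via the complementary pair $S_{\pi,q}$, $\overline{S_{\pi,q}}$ for $q$ odd). Your case split is exhaustive and your handling of the degenerate $q=1$ case agrees with the paper's.
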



\section{The grasshopper on a sphere}

\subsection{Statement of the problem}

We now consider the spherical (one-lawn, antipodal) version of the problem.
The lawn is now a subset $L$ of the sphere, $\mathbb S^2$, in three-dimensional space,
that is antipodal: every point $x$ of the sphere belongs to $L$
if and only if the opposite point does not belong to $L$.
As before, the grasshopper starts at a point chosen
uniformly at random in the lawn, and jumps a fixed distance $\jp$
in a direction chosen uniformly at random. The goal is to pick
the shape of $L$ that maximizes the probability of
a successful jump, i.e., the probability of staying
on the lawn. Put differently, the goal is to maximize the integral
\begin{equation}\label{spheresuccess}
    \int_{\mathbb S^2} f(x)\, ds \int_{C_{\jp}(x)} f(y)\, d\omega,
\end{equation}
where the point $x$ corresponds to the surface element $ds$
on the sphere $\mathbb S^2$,
the point $y$ is taken from the circle~$C_{\jp}(x)$ of radius~$\jp$ around the point $x$,
the angle $\omega \in [0, 2\pi)$ is the corresponding position on the circle,
and the function
$f \colon \mathbb S^2 \to \{0, 1\}$ is defined by
$f(x) = 1$ if and only if $x \in L$.

Theorems~\ref{thm:opt} and~\ref{thm:main} below show that
the retention probability, which is proportional to (\ref{spheresuccess}), is maximised by the hemispherical lawn
if and only if $\jp = \pi/q$, with $q>1$ integer.

\subsection{Corollaries of results on the circle}

\begin{theorem}\label{thm:opt}
The optimal retention probability is 
$(1 - \frac{1}{q}) $
for an antipodal lawn on the sphere with jump  $\frac{\pi}{q}$,
where $q\geq 2 $ is a positive integer.
\end{theorem}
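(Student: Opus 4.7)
The plan is to bound $P_{\mathrm{sph}}$ above by reducing to the circle result from the previous section, and to cite the hemispherical construction of Ref.~\cite{kent2014bloch} for the matching lower bound. The key observation is that a jump of spherical distance $\pi/q$ from a point $x$ in a tangent direction $\omega$ travels along the great circle through $x$ tangent to $\omega$, so every spherical jump decomposes as a jump of arc length $\pi/q$ along some great circle.

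For the upper bound, let $L\subset\mathbb{S}^2$ be an arbitrary antipodal lawn and $C$ any great circle. Then $L\cap C$ is an antipodal subset of $C$ (a circle of circumference $2\pi$) and therefore has length exactly $\pi$. Applying the previously established antipodal lemma in the rational case with $p=1$, the retention probability $P_C$ on $C$ with jump $\pi/q$ satisfies $P_C \le 1 - 1/q$. What remains is to express $P_{\mathrm{sph}}$ as an average of the $P_C$. I would use the natural bijection between the tangent bundle $\mathbb{S}^2\times[0,2\pi)$ and triples (great circle $C$, point $x\in C$, orientation of $C$), under which $dA(x)\,d\omega = \sum_{\mathrm{or}} ds(x)\,dC$, with $dC$ the natural measure of total mass $2\pi$ on the space $\mathcal{G}$ of undirected great circles. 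Combining this change of variables with the defining expression~(\ref{successprob}) for the circular retention on each $C$ gives
\[
P_{\mathrm{sph}} = \frac{1}{2\pi}\int_{\mathcal{G}} P_C\, dC \le 1 - \frac{1}{q}.
\]

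For the matching lower bound, the hemispherical lawn has retention probability $1-\phi/\pi$ for a jump of length $\phi$, as computed in Ref.~\cite{kent2014bloch}; at $\phi=\pi/q$ this is exactly $1-1/q$. The main step to get right is the measure-theoretic change of variables that realises $P_{\mathrm{sph}}$ as an average of $P_C$ over $\mathcal{G}$ with the correct normalisation; once this Fubini-type identity is in place, the upper bound is immediate from the circle lemma and the lower bound is quoted from prior work.
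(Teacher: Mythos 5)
Your proposal is correct and follows essentially the same route as the paper: reduce to the circle by observing that each starting point and jump direction determine a great circle, on which the antipodal circle lemma gives the upper bound $1-1/q$, with the hemisphere (whose retention probability $1-\jp/\pi$ is known from Ref.~\cite{kent2014bloch}) attaining it. The only difference is presentational --- you spell out the Fubini/disintegration identity over great circles explicitly, where the paper states it informally --- and your normalisation checks out.
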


\begin{proof} 
For a single antipodal lawn on the circle, with jump $\frac{\pi}{q}$
for positive integer $q$,
we have shown that the optimal
retention probability is $1 - \frac{1}{q}$ and is
attained by a semi-circular lawn.  

Now consider an antipodal lawn on the sphere with the same jump
$\frac{\pi}{q}$. 
Any starting point on the lawn and any jump direction
together define a great circle.   The start point
is equally likely to be in any arc of given length
on the circle.   Hence our argument for the upper bound
$1 - \frac{1}{q}$ for antipodal lawns on the circle 
also applies to antipodal lawns on the sphere. 
\end{proof}

Since a hemispherical lawn attains this bound,     
we see that hemispherical lawns are optimal for jumps $\frac{\pi}{q}$,
where $q \geq 2$ is a positive integer.    
We show below that hemispherical
lawns are {\it not} optimal for any other jump values. 

For a pair of antipodal lawns on the circle, with jump $\frac{\pi}{q}$
for even positive integer $q$, we have shown that the optimal
retention probability is 
$1 - \frac{1}{q}$ and is
attained by a pair of identical semi-circular lawns.  
Again, our arguments extend to a pair of antipodal lawns on the
sphere: the optimal retention probability is 
$1 - \frac{1}{q}$ and is
attained by a pair of identical hemi-spherical lawns.  
This result was proved in Theorem 1 of Ref. \cite{kent2014bloch}.

\subsection{Construction of lawns with greater retention probability
  than the hemisphere for $\jp \neq \frac{\pi}{q}$}

We now consider jumps $\jp \neq \frac{\pi}{q}$ and show that
hemispherical lawns are not optimal in these cases, by constructing
lawns that have higher retention probability.
Each of the lawns we construct will be a hemisphere with a ``cogged'' boundary 
akin to the construction in~\cite{goulko2017grasshopper}. The number, size and layout of the cogs 
varies according to the size~$\jp$ of the grasshopper's jump. We consider three 
cases according as~$\jp/ \pi$ is irrational, rational with even numerator, or rational 
with odd numerator. These cases are resolved in Lemmas~\ref{l:irrat}, \ref{l:peven} and~\ref{l:podd}, respectively. 

Lemma~\ref{l:peven} will deal with the simplest case: $\jp=p\pi/q$ where $p$ and $q$ are coprime 
and $p$ is even. This construction is illustrated in Figure~\ref{fig:cc} (left) for $\jp= 6 \pi / 13$.
Starting from any point on the equator, take the set of $q$ points 
visited by a sequence of $q$ consecutive jumps of size~$\jp$ travelling round 
and round the equator $p/2$ times before returning to the starting point. 
Take a second sequence with an antipodal starting point and continuing with the sequence 
of points antipodal to the first sequence. Note that these $2q$ points are distinct and 
separated from each other by a distance of at least $\pi/q$. 
The points are spaced evenly, distance $\pi/q$ apart, round the equator.
Draw circles of sufficiently small radius $r = r(\jp)$ around these $2q$ points such that 
there are no overlaps between them. 
Our lawn consists of the southern hemisphere with these circles added 
for the first sequence and removed for the second sequence.  In other words, we modify 
the hemisphere to fill semi-circular \emph{caps} above the equator and remove semi-circular 
\emph{cups} below the equator, creating our cogged hemisphere.

\begin{figure}[thp]
\begin{center}
\includegraphics[width=0.45\textwidth]{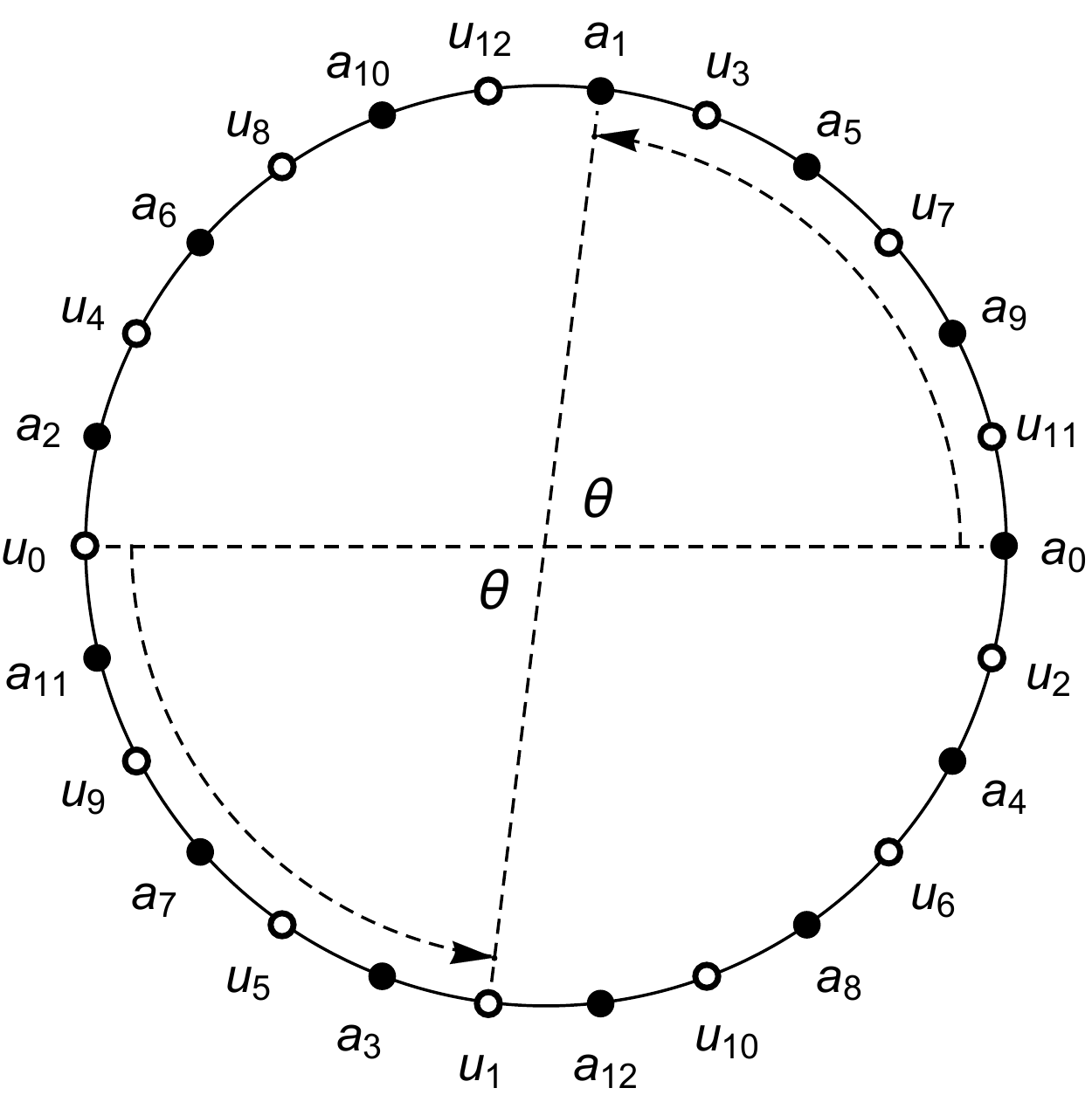}
\hspace*{0.08\textwidth}
\includegraphics[width=0.45\textwidth]{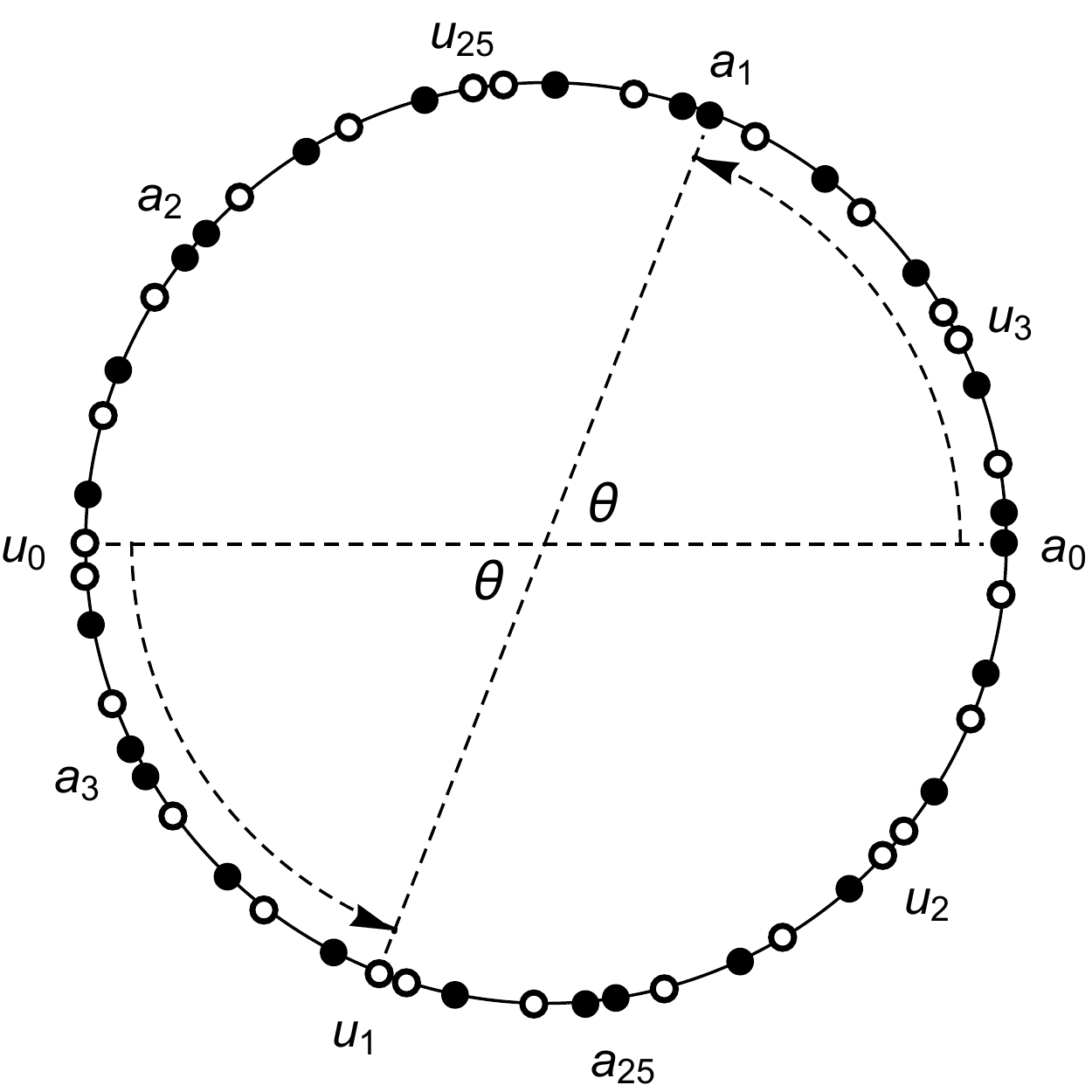}
\end{center}
\caption{Construction of modified lawns for rational $\jp=p\pi/q$ and $p$ even, with $p=6$ and $q=13$ (left panel, Lemma \ref{l:peven}) and for irrational $\jp/\pi$ with $\jp=1.2$ (right panel, Lemma \ref{l:irrat}) The caps and cups around the equator are shown as black and white dots respectively.}
\label{fig:cc}
\end{figure}

For comparison, the right panel of Figure~\ref{fig:cc} looks ahead to the corresponding construction when $\jp/\pi$ is irrational. 
This case will be considered in Lemma~\ref{l:irrat}. 

Our construction ensures that a jump of length~$\jp$ cannot connect any point 
in a cap to any point in a cup, or vice versa. For Lemma~\ref{l:peven}, most of our analysis involves 
estimating the probability of a jump between caps or between cups. For Lemmas~\ref{l:irrat} and~\ref{l:podd} 
we need also to consider jumps from caps or cups to the whole hemisphere $S$;
we describe the modified constructions in the proofs of these lemmas.
Lemma~\ref{l:podd} requires the further complication of having caps and cups of varying sizes.

\subsection{Preliminaries}

\begin{figure}[t]
\begin{center}
\includegraphics[width=0.4\textwidth]{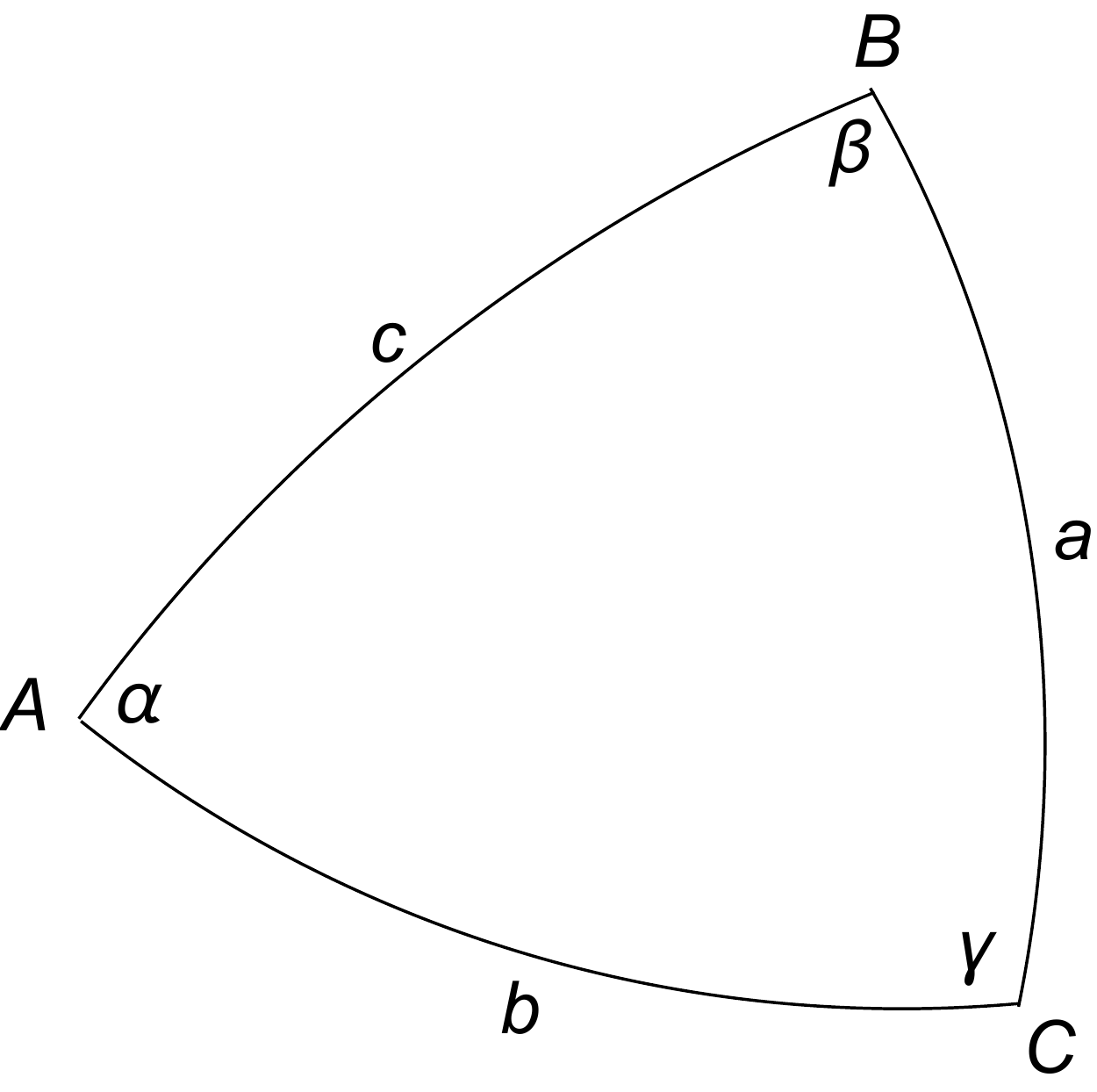}\hfill
\includegraphics[width=0.4\textwidth]{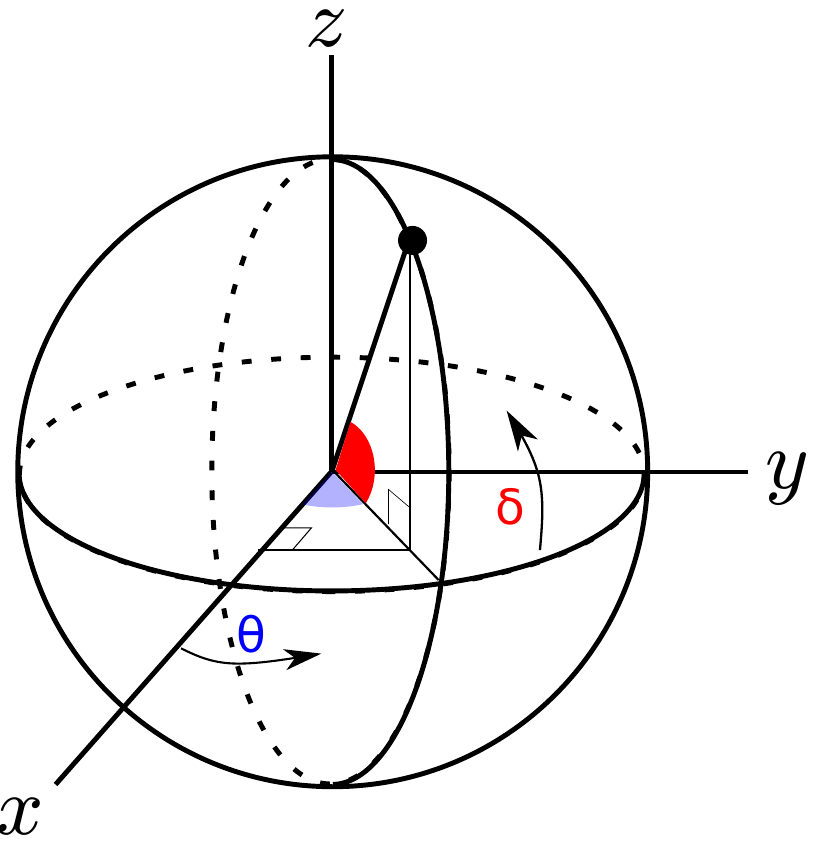}
\end{center}
\caption{Spherical triangle (left) and spherical coordinates (right).}
\label{fig:triangle}
\end{figure}

We recall some basic formulae in spherical geometry. 
Using the notation in Figure~\ref{fig:triangle}, 
where $a,b,c$ are (great-circle) lengths and $\alpha,\beta,\gamma$ are angles, 
we have the \emph{sine rule}:
\begin{equation}\label{eq:sin}
\frac{\sin a}{\sin\alpha} = \frac{\sin b}{\sin\beta} = \frac{\sin c}{\sin\gamma} , 
\end{equation}
the \emph{cosine rule}:
\begin{equation}\label{eq:cos}
\cos a = \cos b \cos c + \sin b \sin c \cdot \cos\alpha ,
\end{equation}
and, in particular, 
\begin{equation}\label{eq:rtcos}
\mathrm{if\ } \alpha = \pi/2 \mathrm{\ then\ }\cos a = \cos b \cos c .
\end{equation}
The next formula follows easily:
\begin{equation}\label{eq:rttan}
\mathrm{if\ } \alpha = \pi/2 \mathrm{\ then\ }\tan a \cos \beta = \tan c .
\end{equation}

For points on the sphere, it is convenient to use pairs of angles $(\lo,\la)$ 
based on (longitude, latitude), where $\lo$ is the azimuthal angle and $\la$ 
is the \emph{elevation} or \emph{copolar} angle, see Figure~\ref{fig:triangle}. 
We take the radius of the sphere to be unity, so a typical point on the surface with spherical coordinates $(\lo,\la)$
has cartesian coordinates $(x,y,z) = (\sin\lo\cos\la, \cos\lo\cos\la, \sin\la)$, 
where the $z$-axis passes through the poles. 

For $B$ and $C$ points on the sphere, the dot-product of their cartesian coordinates gives their angular distance. In Figure~\ref{fig:triangle},
\begin{equation}\label{eq:dist}
B\cdot C = \cos a .
\end{equation}

We recall the notation for the cosecant function: $\csc a = 1 / \sin a$.

In Section~\ref{sec:podd} we make use of the following trigonometric identities and corollaries. 
\begin{proposition}\label{prop:sumcos}
For all integers $q \ge 2$, $$\sum^{q-1}_{j=0} \cos\left(\frac{2j \pi}{q}\right) = \sum^{q-1}_{j=0} \cos\left(\frac{(2j+1) \pi}{q}\right) = 0 .$$
\end{proposition}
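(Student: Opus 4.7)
The plan is to reduce both sums to the well-known fact that the sum of the $q$-th roots of unity vanishes for $q \geq 2$, by passing through complex exponentials.

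First I would observe that
\[
\sum_{j=0}^{q-1} \cos\!\left(\frac{2j\pi}{q}\right) = \operatorname{Re} \sum_{j=0}^{q-1} e^{2\pi i j/q} = \operatorname{Re} \sum_{j=0}^{q-1} \zeta^j,
\]
where $\zeta = e^{2\pi i/q}$ is a primitive $q$-th root of unity. Since $\zeta \neq 1$ (because $q \geq 2$), the geometric series formula gives
\[
\sum_{j=0}^{q-1} \zeta^j = \frac{\zeta^q - 1}{\zeta - 1} = 0,
\]
and taking real parts yields the first equality.

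For the second sum, I would factor out a common phase:
\[
\sum_{j=0}^{q-1} \cos\!\left(\frac{(2j+1)\pi}{q}\right) = \operatorname{Re}\!\left( e^{i\pi/q} \sum_{j=0}^{q-1} e^{2\pi i j/q} \right),
\]
and the inner sum is again $0$ by the argument above, so the second equality follows as well.

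There is essentially no obstacle here; the only point requiring any care is ruling out $\zeta = 1$, which is why the hypothesis $q \geq 2$ is needed. An alternative route would be to pair the $j$-th term with the $(j + q/2)$-th term when $q$ is even (each pair cancelling because $\cos(\theta + \pi) = -\cos\theta$) and to use symmetry about $\pi$ to pair terms when $q$ is odd, but the uniform treatment via roots of unity is cleaner.
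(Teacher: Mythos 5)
Your proof is correct and takes essentially the same approach as the paper: both reduce the sums to the vanishing of a sum of $q$ equally spaced complex exponentials. The only (immaterial) difference is that the paper establishes this vanishing by noting the set of summands is invariant under rotation by $2\pi/q$, whereas you use the geometric series formula with $\zeta \neq 1$; both correctly hinge on $q \ge 2$.
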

\begin{proof}
In each case the summands are the real parts of a set of complex numbers regularly spaced around the origin in the Argand diagram. 
The sets are $\{e^\frac{2j\,i\pi}{q}\ |\ 0\leq j<q \}$ and $\{e^\frac{(2j+1)\,i\pi}{q}\ |\ 0\leq j<q \}$. 
These symmetric sets and therefore their sums are each invariant under a rotation by $2\pi/q$ about the origin. Hence each sum is zero. 
\end{proof}
\begin{corollary}\hfill\\ \label{cor:sums}
\noindent {\rm (i)} $\quad \sum_{j=0}^{q-1} 2\sin \frac{j\pi}{q} \sin\frac{(j+1)\pi}{q} = q\cos\frac{\pi}{q}$; \\
{\rm (ii)} $\quad \sum_{j=0}^q 2\sin^2\frac{j\pi}{q} = q$.
\end{corollary}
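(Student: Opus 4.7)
The plan is to convert both sums into sums of cosines and then invoke Proposition~\ref{prop:sumcos}. There is no significant obstacle here: each identity reduces to a standard product-to-sum (or half-angle) manipulation, after which the two vanishing sums of Proposition~\ref{prop:sumcos} finish the job. The only minor care needed is bookkeeping in part~(ii), where the index ranges over $j = 0, \ldots, q$ rather than $j = 0, \ldots, q-1$, so I must separate out the boundary term.

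For part~(i), I would apply the product-to-sum identity $2\sin A \sin B = \cos(A - B) - \cos(A + B)$ with $A = j\pi/q$ and $B = (j+1)\pi/q$. This rewrites the $j$-th summand as $\cos(\pi/q) - \cos((2j+1)\pi/q)$. Summing over $j = 0, \ldots, q-1$ then gives
\begin{equation*}
\sum_{j=0}^{q-1} 2\sin\tfrac{j\pi}{q}\sin\tfrac{(j+1)\pi}{q} \;=\; q\cos\tfrac{\pi}{q} \;-\; \sum_{j=0}^{q-1} \cos\tfrac{(2j+1)\pi}{q},
\end{equation*}
and the second sum vanishes by Proposition~\ref{prop:sumcos}, leaving exactly $q\cos(\pi/q)$.

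For part~(ii), I would use the half-angle identity $2\sin^2 A = 1 - \cos(2A)$, yielding
\begin{equation*}
\sum_{j=0}^{q} 2\sin^2\tfrac{j\pi}{q} \;=\; (q+1) \;-\; \sum_{j=0}^{q} \cos\tfrac{2j\pi}{q}.
\end{equation*}
The cosine sum splits as $\sum_{j=0}^{q-1} \cos(2j\pi/q) + \cos(2\pi)$. The first part is zero by Proposition~\ref{prop:sumcos}, and the boundary term equals $1$, so the total cosine sum is $1$ and the whole expression collapses to $q$. Both statements therefore follow directly from Proposition~\ref{prop:sumcos} once the standard trigonometric identities are applied.
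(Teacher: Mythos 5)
Your proof is correct and follows essentially the same route as the paper: the product-to-sum identity for (i) and the half-angle identity for (ii), each finished by Proposition~\ref{prop:sumcos}. In fact your handling of part~(ii) is slightly more careful than the paper's, which silently drops the $j=q$ term (harmless, since $\sin\pi=0$), whereas you account for the boundary term $\cos(2\pi)=1$ explicitly.
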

\begin{proof}

For (i),
\begin{eqnarray*}
\sum_{j=0}^{q-1} 2\sin \frac{j\pi}{q} \sin\frac{(j+1)\pi}{q} 
  &=& \sum_{j=0}^{q-1}\left( \cos\left(\frac{j\pi}{q}-\frac{(j+1)\pi}{q}\right) - \cos\left(\frac{j\pi}{q}+\frac{(j+1)\pi}{q}\right)\right) \\
  &=& \sum_{j=0}^{q-1} \cos \frac{\pi}{q} - \sum_{j=0}^{q-1} \cos\frac{(2j+1)\pi}{q} \\
  &=& q\cos \frac{\pi}{q} - 0, \mathrm{\quad from\ Proposition~\ref{prop:sumcos}.\ }\\  
\end{eqnarray*}

For (ii),
\begin{eqnarray*}
\sum_{j=0}^{q-1} 2\sin^2 \frac{j\pi}{q} 
  &=& \sum_{j=0}^{q-1}\left(1- \cos\frac{2j\pi}{q}\right)  \\
  &=& q, \mathrm{\quad again\ from\ Proposition~\ref{prop:sumcos}.\ }\\  
\end{eqnarray*}
\end{proof}

Finally, in the analysis of our construction
the following proposition will be useful.

\begin{proposition}
\label{prop:sqrt}
Let $y$ and $z$ satisfy $y \ge 0$, $y + z \ge 0$,
and $z = O(r^{2 d})$
for some $d > 0$ as $r \to 0$.
Then
\begin{equation*}
\sqrt{y + z} = \sqrt{y} + O(r^d).
\end{equation*}
\end{proposition}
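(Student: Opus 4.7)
The plan is to bound $\sqrt{y+z}-\sqrt{y}$ by $O(r^d)$ by splitting into cases according to the size of $y$ relative to $r^{2d}$. The starting observation is the standard identity
\begin{equation*}
\sqrt{y+z}-\sqrt{y} \;=\; \frac{z}{\sqrt{y+z}+\sqrt{y}},
\end{equation*}
valid whenever the denominator is positive. If $\sqrt{y}$ is comfortably larger than $r^d$, this identity immediately gives a bound of order $|z|/\sqrt{y} = O(r^{2d})/r^d = O(r^d)$. The only real difficulty is when $y$ is small (possibly zero), so that the denominator $\sqrt{y+z}+\sqrt{y}$ might itself shrink at rate $r^d$ or faster, making the identity unhelpful.

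To handle this, I would split on the threshold $y\le r^{2d}$ versus $y>r^{2d}$. In the first case, since $|z|=O(r^{2d})$, both $y$ and $y+z$ are $O(r^{2d})$; using $y+z\ge 0$ we get $\sqrt{y+z}=O(r^d)$, and similarly $\sqrt{y}\le r^d$. The triangle inequality then yields
\begin{equation*}
\bigl|\sqrt{y+z}-\sqrt{y}\bigr| \;\le\; \sqrt{y+z}+\sqrt{y} \;=\; O(r^d).
\end{equation*}
In the second case, $\sqrt{y}>r^d>0$, so the identity above applies and gives
\begin{equation*}
\bigl|\sqrt{y+z}-\sqrt{y}\bigr| \;\le\; \frac{|z|}{\sqrt{y}} \;<\; \frac{|z|}{r^d} \;=\; \frac{O(r^{2d})}{r^d} \;=\; O(r^d).
\end{equation*}
Combining the two cases yields the claim.

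The main obstacle is really just the small-$y$ regime, where the naive derivative bound $(\sqrt{y+z})'=1/(2\sqrt{y})$ blows up; the cutoff at $y=r^{2d}$ is chosen exactly so that the two estimates balance at $O(r^d)$. One could also unify the argument by writing $\sqrt{y+z}+\sqrt{y}\ge\sqrt{y+z+y}/\sqrt{2}$ or similar, but the explicit case split seems cleanest and makes transparent that the exponent $d$ (rather than $2d$) is tight precisely when $y$ itself is of size $r^{2d}$.
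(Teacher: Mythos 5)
Your proof is correct, but it takes a different route from the paper's. The paper reduces without loss of generality to $z \ge 0$ (if $z<0$, swap the roles of $y$ and $y+z$) and then uses the one-line estimate
\begin{equation*}
\bigl(\sqrt{y+z}-\sqrt{y}\bigr)^2 \;\le\; \bigl(\sqrt{y+z}-\sqrt{y}\bigr)\bigl(\sqrt{y+z}+\sqrt{y}\bigr) \;=\; z,
\end{equation*}
so that $\bigl|\sqrt{y+z}-\sqrt{y}\bigr| \le \sqrt{|z|} = O(r^d)$ uniformly --- in effect the subadditivity bound $|\sqrt{a}-\sqrt{b}|\le\sqrt{|a-b|}$ for nonnegative $a,b$, with no case distinction. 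Your threshold argument at $y = r^{2d}$ arrives at the same conclusion and is equally rigorous; its advantage is that it makes explicit where the degradation from $O(r^{2d})$ to $O(r^d)$ occurs (only when $y \lesssim r^{2d}$), and in the complementary regime it actually yields the sharper bound $O(r^{2d}/\sqrt{y})$, which the paper's uniform estimate discards. The paper's version is shorter and avoids the case split, but nothing in your argument is missing or incorrect.
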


\begin{proof}
Assume without loss of generality that $z \ge 0$; then
\begin{equation*}
(\sqrt{y + z}-\sqrt{y})^2\leq (\sqrt{y + z}-\sqrt{y})(\sqrt{y + z}+\sqrt{y}) = z ,
\end{equation*}
and therefore 
$
\sqrt{y + z}-\sqrt{y} \leq \sqrt{z} = O(r^d).
$
\end{proof}

\subsection{Analysis of the construction}\label{sec:anal}

In our constructions for any fixed jump distance $\jp$, we will
take the cap and cup radius $r$ to be sufficiently small. We can
assume that $r$ is much smaller than $\jp$ and formally we will
take $r\to 0$.

For each of our lemmas we need to analyse the difference between jumps
on a hemisphere and jumps on our lawn $L$. Denote by $A$, $U$ and $S$,
the c{\bf a}ps, the c{\bf u}ps and the {\bf s}outhern hemisphere
respectively. Note that $L$ consists of $S$ with $A$ added and $U$
taken away. The set of successful jumps from $L$ to $L$ can be
classified as jumps from $S$ to $S$ plus jumps from $A$ to $S$ and
vice versa, plus jumps from $A$ to $A$, but minus jumps to or from
$U$. These latter are jumps from $U$ to $S\setminus U$ and vice versa
since our construction ensures that no jump is possible between $A$
and $U$. We account for jumps involving $U$ as jumps $U$ to $S$ plus
jumps $S$ to $U$ minus jumps $U$ to $U$, since these last were counted
twice. In symbols we may express this as:
\begin{equation}
\br{LL}\ =\ \br{SS} + \br{AS} + \br{SA} + \br{AA} - (\br{US} + \br{SU} - \br{UU}) . \label{eq:LL}
\end{equation}
For subsets $X,Y$ of the spherical surface, we denote by $\br{XY}$ the probability that 
the grasshopper starts at a point in $X$ and ends up at a point in $Y$.
For a sequence of caps corresponding to distance $\jp$ 
and a corresponding antipodal sequence, we define the following quantities:
$\br{aa}$ is the probability
of a jump from one particular cap to another particular cap distance $\jp$ from
the first, $\br{uu}$ for the corresponding probability for cups, $\br{aS}$ for
the jump probability between one cap and $S$, $\br{uS}$ for the corresponding
probability for a cup, $\br{SS}$ for the probability of a jump from $S$ to $S$,
and finally $\br{aN}$ for a jump from a cap to the northern hemisphere. We
define $\br{Sa}$ and $\br{Su}$ similarly. For Lemmas~\ref{l:peven} and~\ref{l:irrat}, each cap 
and cup is the same size. Later, for Lemma~\ref{l:podd}, this will not be the case and 
we will revise our notation accordingly.

It is easy to show that, for all $X,Y$, 
\begin{equation}\label{eq:sym1}
\br{XY} = \br{YX} .
\end{equation}
By symmetry we find that 
\begin{equation}
\br{aa} = \br{uu} \text{\  and\ } \br{uS} = \br{aN} . 
\label{eq:sym2}
\end{equation}

\begin{figure}[htp]
\begin{center}
\includegraphics[width=0.90\textwidth]{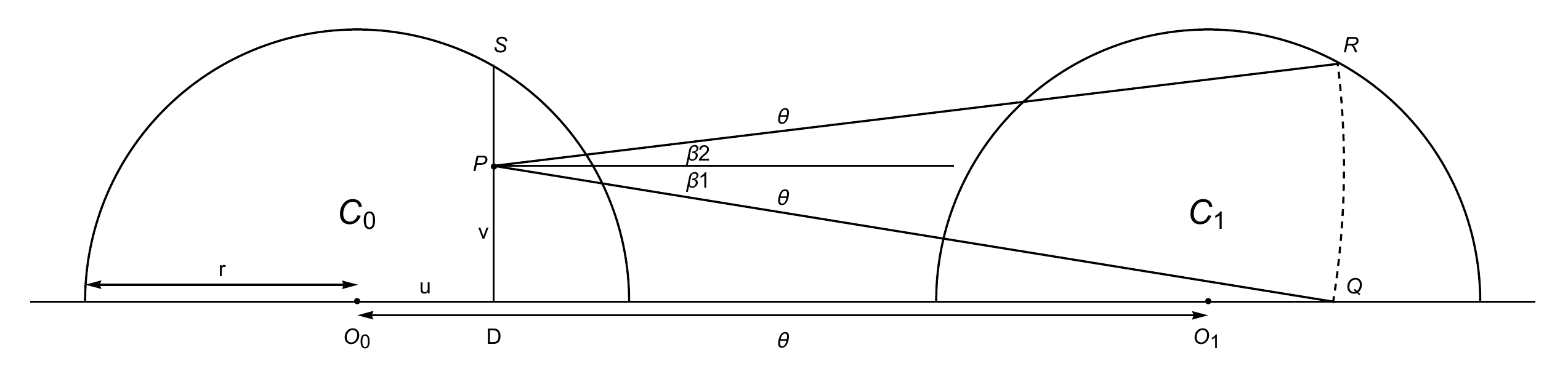}
\caption{Jump geometry\label{fig:jump}}
\end{center}
\end{figure}%

In Figure~\ref{fig:jump}, we show two successive caps $C_0,C_1$ of radius $r$,
whose centres $O_0$, $O_1$ are at angular distance $\jp$ from each other,
and a sample point $P$ in $C_0$. For jumps from $P$ towards $C_1$, $\beta_1$ and $\beta_2$ 
are the angles between the latitude through $P$ and the direction of jumps to the equator at $Q$ 
and to the circumference of $C_1$ at $R$, respectively. We see that $\beta_1\ge 0$ always, 
but it is possible that $\beta_2<0$, for example, if $P$ is close enough to the point $S$ 
in Figure~\ref{fig:jump}.

\begin{proposition}\label{prop:R}
The circle of possible jump destinations from $P$ intersects the upper semicircle 
of radius $r$ centred at $O_1$ exactly once, so $R$ is well-defined.
\end{proposition}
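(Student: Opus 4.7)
The plan is to choose coordinates that make the geometry transparent and reduce the claim to a sign analysis for the roots of a trigonometric equation. I would place $O_1 = (1,0,0)$ with the equator in the plane $\{z = 0\}$ and $O_0 = (\cos\jp, \sin\jp, 0)$, so that the semi-disk $C_0$ lies in the upper hemisphere. Parameterising the full boundary circle of $C_1$ as
\[
X(\psi) = (\cos r,\ \sin r \cos\psi,\ \sin r \sin\psi), \qquad \psi \in [0, 2\pi),
\]
the condition $X(\psi) \cdot P = \cos\jp$ defining the jump circle $T$ becomes
\[
A \cos\psi + B \sin\psi = C, \qquad A = \sin r\, P_y,\ B = \sin r\, P_z,\ C = \cos\jp - \cos r\, P_x.
\]
Both $T$ and $\partial C_1$ are small circles (intersections of $\mathbb{S}^2$ with planes), so this equation has at most two solutions in $[0, 2\pi)$, and points with $\sin\psi > 0$ are precisely those on the upper semicircle.

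The key step is the identity
\[
\sin\psi_1 \sin\psi_2 \;=\; \frac{C^2 - A^2}{A^2 + B^2},
\]
obtained by squaring the linear equation and using $\cos^2\psi + \sin^2\psi = 1$ to eliminate $\cos\psi$, yielding a quadratic in $\sin\psi$. Hence the two intersection points lie strictly on opposite sides of the equator iff $C^2 < A^2$. Recognising $\cos r\, P_x \pm \sin r\, P_y$ as the cosines of the distances from $P$ to the two equatorial endpoints $E^\pm = (\cos r,\ \pm\sin r,\ 0)$ of $\partial C_1$, this is equivalent to the distance inequality
\[
d(P, E^+) \;<\; \jp \;<\; d(P, E^-).
\]

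The step I expect to be the main obstacle is verifying this distance inequality throughout $C_0$. The spherical triangle inequality applied to the triples $O_0, E^\pm, P$, together with $d(E^+, O_0) = \jp - r$ and $d(E^-, O_0) = \jp + r$ (both measured along the equator), immediately yields $d(P, E^+) \le \jp \le d(P, E^-)$ for every $P \in C_0$; strictness requires ruling out the degenerate configurations in which $O_0$ lies on the geodesic from $E^\pm$ to $P$ at distance $r$ from $P$, which occur only at the two equatorial corners of the semi-disk $C_0$. Combined with the transversality condition $A^2 + B^2 > C^2$ (which follows from $r$ being small and ensures two intersections rather than zero), this gives exactly one intersection of $T$ with the open upper semicircle; at the two exceptional corners $T$ is tangent to $\partial C_1$ at $E^+$ or $E^-$ on the equator, which still gives a unique $R$ once one adopts the convention that the upper semicircle includes its endpoints.
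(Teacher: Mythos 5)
Your proof is correct, but it follows a genuinely different route from the paper's. The paper works with the spherical triangles of Figure~\ref{fig:jump}: it first shows (by a monotonicity argument on $f(x)=\cos v\cos(x+\jp-u)-\cos\jp$) that the equatorial point $Q$ of the jump circle lies within the diameter of $C_1$, then invokes, explicitly without proof, the ``geometrically obvious'' fact that the jump circle meets the equator exactly once to the right of $P$; from these two facts it deduces that the jump circle meets both the circumference of $C_1$ and its mirror image below the equator, and finally caps the number of intersections at two by intersecting the two planes containing the circles. You instead work entirely in Cartesian coordinates, reduce the intersection to the linear trigonometric equation $A\cos\psi+B\sin\psi=C$, and replace the whole ``one above, one below'' step by the sign of the product of roots $\sin\psi_1\sin\psi_2=(C^2-A^2)/(A^2+B^2)$, which you then convert into the clean distance condition $d(P,E^+)<\jp<d(P,E^-)$ and verify by the spherical triangle inequality. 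Your version buys rigour and self-containment: it dispenses with the unproved equator-crossing fact, gets existence of the two intersections for free from $C^2<A^2\le A^2+B^2$, and pinpoints exactly where strictness fails (the two equatorial corners of $C_0$, where the jump circle is tangent to $\partial C_1$ at $E^{\pm}$ --- a measure-zero set that is harmless for the subsequent integrals). The paper's version is shorter on the page and reuses the $(u,v)$, $Q$, $D$ set-up that Proposition~\ref{prop:betas} needs anyway. Two small points you should make explicit: $A=\sin r\,P_y>0$ throughout $C_0$ when $0<\jp<\pi/2$ and $r$ is small (so the division by $A$ and the rewriting of $|C|<|A|$ as a two-sided inequality are legitimate), and squaring introduces no spurious roots because each root $s$ of the quadratic satisfies $((C-Bs)/A)^2+s^2=1$ by construction and hence determines a unique $\psi$.
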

\begin{proof}

We first note that the point $Q$ always lies within the diameter of $C_1$. 
Let $x$ be the longitude of $Q$ relative to $O_1$, so $DQ=x+\jp-u$. By Equation~(\ref{eq:rtcos}) applied to 
the right-angled spherical triangle \sphericaltriangle $PQD$, $x$ is the solution to
\begin{equation*}\label{eq:x}
f(x):=\cos v \cos(x+\jp-u) - \cos\jp = 0 .
\end{equation*}
Since $f(x)$ is a decreasing function of $x$ and $f(u)=\cos v \cos\jp - \cos\jp \leq 0$, 
showing that $f(-r)\geq 0$ will establish that $-r\leq x \leq u$, and so $Q$ lies within the diameter of $C_1$.

Since $v\leq SD$ and $\cos u \cos SD =\cos r$ by Equation~(\ref{eq:rtcos}), 
\begin{eqnarray*}
\cos u\ f(-r) &=& \cos u \cos v \cos(-r+\jp-u) - \cos u \cos\jp \\
 &\geq& \cos r \cos(-r+\jp-u) - \cos u \cos\jp .
\end{eqnarray*}
The right-hand side of this equation is an increasing function of $u$ and
takes the value $\cos r \cos\jp - \cos(-r)\cos\jp = 0$ at $u=-r$.
Since $u\geq -r$, we have $f(-r)\geq 0$ as desired.

We use without proof the geometrically obvious fact that
the jump circle around $P$ intersects the equator exactly once to the right 
of $P$. This is true as long as $\jp$ is fixed and $r \to 0$.
Combined with the argument above, this fact implies that
the jump circle intersects the circumference of $C_1$, as well as
the circumference of the mirror image of $C_1$ below the equator.

It remains to see why the intersection point with the upper semicircle
(the circumference of $C_1$) is unique.
Notice that each of the two curves,
the jump circle and the upper semicircle,
lies in a single plane, and these two planes are different.
Therefore, each possible location for the point $R$
belongs to the line of intersection of these planes.
This line passes through at most two points of the circle around $O_1$.
We already know that the jump circle intersects not only the circumference
of $C_1$ but also its mirror image; therefore, there is exactly one intersection
point with the circumference itself.
\end{proof}

We see that 
\begin{eqnarray*}
\br{aa}&=&\int\limits_C (\beta_1 + \beta_2) \, ds,\\ 
\br{aS}&=&\int\limits_C (\pi-2\beta_1) \, ds,\\  
\br{aN}&=&\int\limits_C(\pi+2\beta_1) \, ds, \mathrm{\ and\ so\ } \\
\br{aN}-\br{aS}&=&4\int\limits_C \beta_1 \, ds,
\end{eqnarray*}
 where $ds$ is a surface element of a cap $C$.

To prepare estimates for these integrals we first find probabilities for a sample point $P$ in a cap. In the following proposition we use the notation given by Figure~\ref{fig:jump}, i.e., $u$ and $v$ are the azimuth and elevation (longitude and latitude) of~$P$ \emph{relative to the centre of} $C_0$ and $r$ is the radius of the cogs.

In the next proposition and everywhere below,
the constants in our $O(\cdot)$ notation depend on $\jp$, but not on $q$,
and we let $r \to 0$.

\begin{proposition}\label{prop:betas}  
For $0<\jp<\pi/2$,
\begin{eqnarray*}
(i)& \beta_1 &= v\cot\jp + O(r^3),\\
(ii)& \beta_1+\beta_2 &= \sqrt{r^2-u^2}\csc\jp + O(r^{3/2}).
\end{eqnarray*}
\end{proposition}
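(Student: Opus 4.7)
The plan is to deduce (i) from the right-angled spherical triangle $PQD$ constructed in the proof of Proposition~\ref{prop:R}, and (ii) from the spherical cosine rule applied to the triangle $PQR$, together with a careful estimate of the geodesic length $QR$ inside the small cap~$C_1$.

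For (i), recall that $D$ is the foot of the perpendicular from $P$ to the equator, so the triangle $PQD$ has sides $PD=v$ and $PQ=\jp$, with a right angle at~$D$. Because the latitude through $P$ is orthogonal to the meridian $PD$ at $P$, the angle $\widehat{DPQ}$ equals $\pi/2-\beta_1$. Formula~(\ref{eq:rttan}) then gives
\[
\tan\jp \,\sin\beta_1 \,=\, \tan v,
\]
so $\sin\beta_1 = \tan v\,\cot\jp$. Taylor-expanding $\tan v$ and $\arcsin$, and using $v=O(r)$ together with the fact that $\cot\jp$ is a constant depending only on $\jp$, yields $\beta_1 = v\cot\jp + O(r^3)$.

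For (ii), the spherical cosine rule~(\ref{eq:cos}) applied to $PQR$, where $PQ=PR=\jp$ and the angle at $P$ is $\beta_1+\beta_2$, reads
\[
\cos(QR) \,=\, \cos^2\jp + \sin^2\jp\,\cos(\beta_1+\beta_2).
\]
Expanding both sides to second order, the leading terms cancel and one obtains $QR^2 = \sin^2\jp\,(\beta_1+\beta_2)^2 + O(r^4)$, whence $QR = \sin\jp\,(\beta_1+\beta_2) + O(r^3)$. It therefore suffices to show that $QR = \sqrt{r^2-u^2} + O(r^{3/2})$. First, the identity $\cos v\,\cos(x+\jp-u)=\cos\jp$ from Proposition~\ref{prop:R} gives $x - u = O(r^2)$, so $x^2 - u^2 = O(r^3)$. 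Next, in local planar coordinates centred at $O_1$ the cap $C_1$ is the upper half of the disk $x_R^2 + y_R^2 = r^2$; the tangent to the jump circle at $Q=(x,0)$ is perpendicular to $\vec{QP}$, which has direction $(-1,v/\jp) + O(r^2)$, and following this tangent upwards shows that $R$ has $x$-coordinate $x + O(r^2)$ and height $t$ satisfying $t^2 = r^2 - x^2 + O(r^3)$. Two successive applications of Proposition~\ref{prop:sqrt} with $d=3/2$ then give $QR = t + O(r^3) = \sqrt{r^2-x^2} + O(r^{3/2}) = \sqrt{r^2-u^2} + O(r^{3/2})$.

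The main subtlety is the $O(r^{3/2})$ error in (ii): although the perturbations inside the square root are only $O(r^3)$, when $u$ approaches $\pm r$ the quantity $\sqrt{r^2-u^2}$ itself becomes small, so a naive Taylor expansion would produce a relative error of order $r^3/\sqrt{r^2-u^2}$ which blows up at the boundary of the cap. Proposition~\ref{prop:sqrt} is designed precisely to sidestep this regime via the uniform subadditive bound $\sqrt{y+z}-\sqrt{y} = O(\sqrt{|z|})$, and once it is applied the stated asymptotics follow with implicit constants depending only on~$\jp$.
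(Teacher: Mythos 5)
Your proof is correct and, in its essentials, follows the same route as the paper's. Part (i) is identical: both derive $\sin\beta_1=\tan v\cot\jp$ from the right-angled triangle $PQD$ via equation~(\ref{eq:rttan}) and Taylor-expand. For part (ii) both start from the cosine rule $\cos QR=\cos^2\jp+\sin^2\jp\cos(\beta_1+\beta_2)$ in the isosceles triangle $PQR$ and finish with Proposition~\ref{prop:sqrt}; the only real difference is how the position of $R$ is pinned down. The paper expands the Cartesian dot products $P\cdot R$, $P\cdot Q$ and $Q\cdot R$ to conclude that the longitude of $R$ relative to $O_1$ is $u+O(r^2)$ and hence $(\beta_1+\beta_2)^2\sin^2\jp=r^2-u^2+O(r^3)$; you instead work in local planar coordinates and argue that the jump circle leaves $Q$ in a direction within $O(r)$ of vertical, so that $R$ sits on the boundary circle of $C_1$ at abscissa $x+O(r^2)$. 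The two arguments are equivalent; yours is more geometric, the paper's more computational, and yours implicitly uses the (true, but unstated) fact that the jump circle deviates from its tangent line at $Q$ by only $O(r^2)$ over an arc of length $O(r)$, since $R$ lies on the circle itself and not on the tangent. Two small slips, neither of which affects the result: the initial bearing of the geodesic $QP$ at $Q$ makes angle $\arcsin(\sin v/\sin\jp)$ with the equator, so the direction is $(-1,v/\sin\jp)$ up to normalization rather than $(-1,v/\jp)$ --- harmless, since only the $O(r)$ bound on the tilt is used; and the intermediate claims $QR=\sin\jp\,(\beta_1+\beta_2)+O(r^3)$ and $QR=t+O(r^3)$ are only justified as $O(r^2)$ (Proposition~\ref{prop:sqrt} with $d=2$; the $O(r^3)$ version would need the quantities to be bounded below by a constant times $r$, which fails as $u\to\pm r$) --- also harmless, since the dominant error is the $O(r^{3/2})$ term you correctly isolate at the end.
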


\begin{proof}
Let us first prove that
$\beta_1$ and $\beta_2$ are both $O(r)$.
Observe that $u,v$ are $O(r)$.
Notice that $\pi/2 - \beta_1$ is an angle in the right-angled spherical triangle $PDQ$.
Hence by equation~\eqref{eq:rttan},
$\sin \beta_1 = O(r) / \tan\jp = O(r)$, and so $\beta_1  = O(r)$.
Next, by the sine rule~\eqref{eq:sin} for \sphericaltriangle $PRQ$,
we have $\sin RQ / \sin (\beta_1 + \beta_2) = \sin \jp / \sin \surfaceangle PRQ$,
and so $\sin (\beta_1 + \beta_2) = O(r)$, since the points $R$ and $Q$
are both inside a circle of radius~$r$. Therefore, $\beta_1 + \beta_2 = O(r)$
and $\beta_2 = O(r)$.

We now prove parts~$(i)$ and $(ii)$ of the proposition.
We take the vertex $O_1$ in Figure~\ref{fig:jump} as the point with spherical coordinates $(0, 0)$. Then $P$ has spherical coordinates $(-\jp + u,v)$ and cartesian coordinates 
$(-\sin(\jp -u)\cos v, \cos(\jp -u)\cos v, \sin v)$. 
Let $Q$ be the point with spherical coordinates $(q,0)$ on the equator inside $C_1$  at distance $\jp$ from $P$. 
From equation~\eqref{eq:rttan} we derive
\begin{equation*}
\sin \beta_1 = \cos \surfaceangle QPD = \tan v \cot \jp = v \cot\jp + O(r^3) ,
\end{equation*}
which implies part~$(i)$ of the proposition.

Let $R$ be the point with spherical coordinates $(\lo,\la)$ on the circumference 
of $C_1$ at distance $\jp$ from $P$. Then
\begin{eqnarray}
P \cdot R &=& \cos\jp ,\label{eq:PR}\\
\cos r &=& \cos\lo \cos\la ,\label{eq:psiphi}
\end{eqnarray}
from equations~\eqref{eq:dist} and~\eqref{eq:rtcos}.

We observe that
$u,v,\lo,\la,q = O(r)$; note that for $q$ this is by Proposition~\ref{prop:R}.
Equation~\eqref{eq:psiphi} gives
$$r^2=\la^2+\lo^2 + O(r^4),$$
and we will now focus on equation~\eqref{eq:PR}.
We have
\begin{eqnarray*}
P&=&(-\sin(\jp-u)\cos v, \cos(\jp-u)\cos v,\sin v); \\
Q&=&(\sin q,\cos q,0); \\
R&=&(\sin \lo \cos \la,\cos\lo\cos \la,\sin \la).
\end{eqnarray*}
Equation~\eqref{eq:PR} thus becomes
\begin{eqnarray*}
P \cdot R &=& \cos \jp = -\sin(\jp-u)\cos v \sin \lo \cos \la + \cos(\jp-u)\cos v \cos\lo\cos \la + \sin v\sin \la \\
 &=& -\sin(\jp-u) \lo  + \cos(\jp-u) + O(r^2).
\end{eqnarray*}
So
$$\cos\jp = -\lo\sin\jp + \lo\cos\jp\sin u + \cos\jp + u \sin\jp + O(r^2) = (u-\lo)\sin\jp + \cos\jp + O(r^2),$$
giving $\lo = u + O(r^2)$.
By our choice of $Q$,
\begin{eqnarray*}
P \cdot Q &=& \cos\jp=-\sin(\jp-u)\cos v\sin q+\cos(\jp-u)\cos v\cos q \\
 &=& -q(\sin\jp\cos u-\cos\jp\sin u)+O(r^3)+\cos\jp+u\sin\jp + O(r^2) \\
 &=& (u-q)\sin\jp+\cos\jp+ O(r^2),
\end{eqnarray*}
and so $q=u+O(r^2)$ and $q-\lo=O(r^2)$.
From \sphericaltriangle $PQR$ we have
$$\cos(\beta_1+\beta_2)\sin^2\jp + \cos^2\jp = \cos QR = Q \cdot R = \sin q\sin \lo\cos \la + \cos q \cos\lo\cos \la.$$
Subtracting each side from $1$ and multiplying by $2$ gives
\begin{eqnarray*}
(\beta_1+\beta_2)^2\sin^2\jp &=& -2q\lo + q^2+\lo^2+\la^2 + O(r^4) \,
  = (q-\lo)^2+\la^2 + O(r^4) \\
 &=& \la^2 + O(r^4)=r^2-\lo^2 + O(r^4) 
  = r^2-u^2 + O(r^3). 
\end{eqnarray*}
Applying Proposition~\ref{prop:sqrt}, we get
$$(\beta_1+\beta_2)\sin\jp = \sqrt{r^2-u^2}+O(r^{3/2}),$$
which establishes part~$(ii)$.
\end{proof}

We can now compare the probabilities $\br{aa}$ and $\br{aN}-\br{aS}$.
\begin{proposition}\label{prop:aaNS} 
For $0<\jp<\pi/2$,
\begin{equation}
\frac{\br{aN}-\br{aS}}{2\br{aa}} = \cos\jp + O(\sqrt r) .
\end{equation}
\end{proposition}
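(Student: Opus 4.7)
The plan is to exploit the integral representations established just before the proposition, namely $\br{aa} = \int_C (\beta_1+\beta_2)\,ds$ and $\br{aN}-\br{aS} = 4\int_C \beta_1 \, ds$, and substitute the asymptotic formulas from Proposition~\ref{prop:betas}. The integration domain $C$ is a single cap, which is the semi-disc of radius $r$ centered on the equator lying in the upper hemisphere; crucially, this upper-half restriction is what allows $\int_C v\, ds$ to be nonzero. Working in the local coordinates $(u,v)$ (longitude and latitude relative to $O_0$) in which Proposition~\ref{prop:betas} is stated, the spherical area element is $ds = (1 + O(r^2))\,du\,dv$ near the equator, so to the orders we care about we may integrate against $du\,dv$ over the half-disc $\{u^2+v^2 \le r^2,\ v \ge 0\}$.

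First I would compute the denominator. Substituting $\beta_1+\beta_2 = \sqrt{r^2-u^2}\,\csc\jp + O(r^{3/2})$ and performing the inner integral in $v$ gives
\begin{equation*}
\br{aa} = \csc\jp \int_{-r}^{r}(r^2-u^2)\,du + O(r^{7/2}) = \frac{4r^3}{3\sin\jp} + O(r^{7/2}),
\end{equation*}
where the error term carries a factor of the cap's area $O(r^2)$ from the remainder $O(r^{3/2})$.

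Next I would compute the numerator. Substituting $\beta_1 = v\cot\jp + O(r^3)$ and evaluating $\int_0^r 2v\sqrt{r^2-v^2}\,dv = \tfrac{2}{3}r^3$ yields
\begin{equation*}
\br{aN}-\br{aS} = 4\cot\jp \cdot \frac{2r^3}{3} + O(r^5) = \frac{8r^3\cos\jp}{3\sin\jp} + O(r^5).
\end{equation*}

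Finally, forming the quotient $\frac{\br{aN}-\br{aS}}{2\br{aa}}$, the leading $r^3$ factors cancel and the $\sin\jp$ factors combine to give exactly $\cos\jp$. The relative error from the numerator's remainder is $O(r^2)$, while the relative error $O(r^{7/2})/O(r^3) = O(r^{1/2})$ inherited from the denominator dominates, yielding the claimed $O(\sqrt r)$ error. The main obstacle is bookkeeping the error terms correctly: in particular the $O(r^{3/2})$ remainder in $\beta_1+\beta_2$ from Proposition~\ref{prop:betas} is what forces the final bound to be $O(\sqrt r)$ rather than something sharper, and one must also check that the $(1+O(r^2))$ correction in the spherical area element does not spoil this estimate (it does not, since it only contributes at relative order $O(r^2)$).
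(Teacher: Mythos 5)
Your argument is correct and yields the stated estimate, but it finishes the computation differently from the paper. You evaluate both integrals explicitly, obtaining $\br{aa}=\tfrac{4r^3}{3\sin\jp}+O(r^{7/2})$ and $\br{aN}-\br{aS}=\tfrac{8r^3}{3}\cot\jp+O(r^4)$, whereas the paper never computes either one: it pairs each point $P=(u,v)$ of the cap with its \emph{mate} $P'=(u,SD-v)$, where $SD=\sqrt{r^2-u^2}+O(r^2)$ is the height of the cap above longitude $u$. Since $\beta_1+\beta_1'=SD\cot\jp+O(r^3)$, this symmetrization rewrites the numerator's integrand in terms of the same function $\sqrt{r^2-u^2}$ that governs the denominator, so the common integral $W=\iint_{C'}\sqrt{r^2-u^2}\,du\,dv=\Theta(r^3)$ cancels in the ratio and only its order of magnitude is needed. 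The content of the mate trick is precisely the identity $\int_C v\,ds=\tfrac12\int_C SD\,ds$, which in your version appears as the numerical coincidence $\tfrac{2r^3}{3}=\tfrac12\cdot\tfrac{4r^3}{3}$; to get that coincidence honestly you must justify (as you assert but do not prove) that replacing the true region $\{\cos u\cos v\ge\cos r\}$ by the Euclidean half-disc and $\cos v\,du\,dv$ by $du\,dv$ perturbs each integral only at relative order $O(r)$ --- easy, but worth a line. In exchange your route delivers the explicit leading constants, which the paper's argument deliberately avoids. The decisive error analysis is identical in both proofs: the $O(r^{3/2})$ remainder in Proposition~\ref{prop:betas}(ii), integrated over the $O(r^2)$ cap and compared with the $\Theta(r^3)$ main term, is what fixes the final bound at $O(\sqrt r)$.
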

\begin{proof}
To make comparison easier, it is convenient to combine the contributions from a pair of points. 
For any point $P$ we define its \emph{mate} $P'$. 
When $P$ has coordinates $(u,v)$ relative to the centre of $C_0$, 
$P'$ has coordinates $(u,v')=(u,SD-v)$, where $\cos SD \cos u = \cos r$ (see Figure~\ref{fig:jump} and equation~\eqref{eq:rtcos}). 
If $\beta_1'$ and $\beta_2'$ are the angles corresponding to $P'$, then 
\begin{equation*}
\beta_1'=(SD-v)\cot\jp + O(r^3) \mathrm{\ and\ } \beta_1'+\beta_2'= \beta_1+\beta_2 + O(r^{3/2}) ,
\end{equation*}
from Proposition~\ref{prop:betas}.
The equation $\cos SD \cos u = \cos r$ implies
\begin{equation*}
1 - \cos^2 SD = \frac{1}{\cos^2 u} \cdot (\cos^2 u - \cos^2 r) = r^2 - u^2 + O(r^4).
\end{equation*}
By Proposition~\ref{prop:sqrt} with $d = 2$, we see that
\begin{align*}
\sin SD &= \sqrt{r^2 - u^2} + O(r^2) \quad\text{and} \\
     SD &= \sqrt{r^2 - u^2} + O(r^2).
\end{align*}

For any integrand $\beta$,
\begin{equation*}
\int\limits_C \beta \, ds = \iint\limits_{C'} \beta \cos v\, du \, dv,
\end{equation*}
since the area of an element $ds$ is $\cos v\, du \, dv$, and where 
\begin{equation*}
C' = \{ (u, v) \mid \cos u \cos v \ge \cos r , -r \le u \le r \text{\ and\ } 0 \le v \le r\}.
\end{equation*}
Note however that $\cos v = 1-O(r^2)$.

When we combine the integral for a sample point $P$ with the integral for its mate $P'$ we get
\begin{align*}
2\br{aa} &= \int\limits_C (\beta_1 + \beta_2) \, ds  +\int\limits_C (\beta_1' + \beta_2') \, ds \\
&= \iint\limits_{C'} (\beta_1 + \beta_2)\cos v + (\beta_1' + \beta_2')\cos v'\, du \, dv  \\
&= \iint\limits_{C'} \left(2\sqrt{r^2 - u^2} \cdot \csc \jp + O(r^{3/2})\right) du \, dv  .\\
\end{align*}

Similarly,
\begin{align*}
\br{aN}-\br{aS} &= \int\limits_C 2\beta_1 \, ds  +\int\limits_C 2\beta_1'  \, ds \\
&= \iint\limits_{C'} 2\beta_1 \cos v + 2\beta_1' \cos v'\, du \, dv  \\
&= \iint\limits_{C'} \left(2\sqrt{r^2 - u^2} \cdot \cot \jp + O(r^3)\right) du \, dv  .\\
\end{align*}
As it is easy to check that $W=\iint\limits_{C'} \sqrt{r^2 - u^2}  du \, dv = \Theta(r^3)$,
meaning that $W = O(r^3)$ and $r^3 = O(W)$, we conclude that
\begin{align*}
\frac{\br{aN}-\br{aS}}{2\br{aa}} &= 
 \frac{2W \cdot \cot \jp + O(r^5)}{2W \cdot \csc \jp + O(r^{7/2})}\\
  &= \cos \jp + O(r^{1/2}) .
  \qedhere
\end{align*}
\end{proof}

\subsection{Rational jump, even numerator} \label{sec:peven}
We have already introduced the case $\jp=p \pi/q$ with $p/q$ irreducible and $p$ even: see Figure~\ref{fig:cc} (left). 

\begin{lemma}\label{l:peven}
For a jump of size $\jp=p \pi/q$ with $p/q$ irreducible, $p$ even and $0<\jp<\pi/2$, there is an antipodal 
lawn $L$ with greater probability of a successful jump than the hemispherical lawn.
\end{lemma}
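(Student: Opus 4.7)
The plan is to compute $\br{LL}-\br{SS}$ using the decomposition~(\ref{eq:LL}) and identify its leading-order sign via Proposition~\ref{prop:aaNS}. Since $L$ is antipodal by construction and $|A|=|U|$, the lawn and the hemisphere share the same total area, so the retention probabilities use the same normalising constant, and it suffices to show $\br{LL}>\br{SS}$ for the cap radius~$r$ sufficiently small. I would first fix $r$ small enough that the $2q$ disks of radius $r$ about the evenly spaced equatorial base points are pairwise disjoint and no jump of length $\jp$ can connect a cap to a cup; under this choice, the decomposition~(\ref{eq:LL}) is valid.

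After using the symmetries~(\ref{eq:sym1}) and (\ref{eq:sym2}) together with the multiplicities $\br{AS}=q\br{aS}$, $\br{US}=q\br{uS}=q\br{aN}$, and $\br{AA}=\br{UU}=2q\br{aa}$---the factor of $2$ coming from the fact that each cap has exactly two neighbours at distance $\jp$ in the equatorial orbit, which is precisely what $p$ being even guarantees (the rotation by $\jp$ acts on the equator as a single cycle of length $q$ through $q$ distinct points spaced by $\pi/q$)---equation~(\ref{eq:LL}) collapses to
\[
\br{LL}-\br{SS} \;=\; 4q\,\br{aa}\left(1-\frac{\br{aN}-\br{aS}}{2\,\br{aa}}\right).
\]
Applying Proposition~\ref{prop:aaNS} converts the parenthesised factor into $1-\cos\jp+O(r^{1/2})$. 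For $\jp\in(0,\pi/2)$ we have $1-\cos\jp>0$, and $\br{aa}$ is a positive quantity of order $r^3$ by Proposition~\ref{prop:betas}(ii) applied to $\br{aa}=\int_C(\beta_1+\beta_2)\,ds$; hence, once $r$ is below a threshold depending on $\jp$, the right-hand side is strictly positive, which is the conclusion of the lemma.

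The main obstacle has already been absorbed into Propositions~\ref{prop:betas} and~\ref{prop:aaNS}, which supply the delicate small-$r$ asymptotics of the local jump-angle integrals $\br{aa}$, $\br{aS}$, and $\br{aN}$. Beyond invoking them, the only remaining work is the combinatorial bookkeeping above and the observation that $1-\cos\jp$ is strictly positive throughout the stated range of $\jp$---which is precisely where the hypothesis ``$\jp<\pi/2$'' (shared with Proposition~\ref{prop:aaNS}) enters. The hypothesis that $p$ is even is what guarantees the clean orbit structure used in the multiplicity count; this structure fails for odd $p$ and is responsible for the more intricate construction (with caps and cups of varying sizes) needed in Lemma~\ref{l:podd}.
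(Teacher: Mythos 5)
Your proposal is correct and follows essentially the same route as the paper: the same cogged-hemisphere construction, the same multiplicity count ($\br{AA}=\br{UU}=2q\,\br{aa}$, $\br{AS}=q\,\br{aS}$, $\br{US}=q\,\br{aN}$) feeding into the decomposition~(\ref{eq:LL}), and the same reduction to $4q\,\br{aa}\bigl(1-\cos\jp+O(\sqrt r)\bigr)>0$ via Proposition~\ref{prop:aaNS}. Your added remarks on the equal-area normalisation and on why evenness of $p$ yields the closed $q$-cycle of caps are correct and only make explicit what the paper leaves implicit.
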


\begin{proof}
The $q$ caps are regularly spaced at intervals $\pi/q$ around the equator alternating with their $q$ antipodal cups. 
A jump from a cap cannot get to a cup and can reach one other cap in each direction, 
while a jump from a cup cannot get to a cap and can reach one other cup in each direction. 
The caps (and similarly the cups) also correspond a cyclic sequence of $q$ jumps of size $p\pi/q$ circling the equator $p/2$ times. 

\begin{table}[thp]
\begin{center}
\begin{tabular}{|c|c|c|c|}
   \hline
   &  $A$  &  $U$ &  $S$ \\
   \hline
$A$ & $2q \cdot \br{aa}$ & 0 & $q \cdot \br{aS}$ \\
   \hline
$U$ & 0 & $2q \cdot \br{uu}$  & $q \cdot \br{uS}$ \\
   \hline
$S$ & $q \cdot \br{Sa}$ & $q \cdot \br{Su}$ & $\br{SS}$\\
   \hline
\end{tabular}
\caption{Classification of jumps for Lemma~\ref{l:peven}}
\label{table:peven}
\end{center}
\end{table}%

We summarise the total probabilities shown in Equation~\eqref{eq:LL} contributing to $\br{LL}$ in Table~\ref{table:peven}.

Using equations~\eqref{eq:LL}, \eqref{eq:sym1} and~\eqref{eq:sym2}, 
we have
\begin{eqnarray*}
\br{LL}-\br{SS}&=&\br{AS}+\br{SA}+\br{AA}-\br{US}-\br{SU}+\br{UU}\\
&=& 2q \cdot \br{aS} + 2q \cdot \br{aa} - 2q \cdot \br{uS} + 2q \cdot \br{uu} \\
&=& 4q \cdot \br{aa}-2q \cdot (\br{aN}-\br{aS}) \\
&=& 4q \cdot \br{aa}\left(1-\frac{\br{aN}-\br{aS}}{2\br{aa}}\right)\\
&=& 4q \cdot \br{aa}\left(1-(\cos\jp + O(\sqrt r))\right) ,
\end{eqnarray*}
by Proposition~\ref{prop:aaNS}.
Since $\cos\jp < 1$, if the cap radius $r$ is chosen sufficiently small then $\br{LL} > \br{SS}$.
\end{proof}


\subsection{Irrational jumps} \label{sec:irrat}
The lawn we construct for irrational $\jp/\pi$ is similar to that in the previous section but with a significant difference. 
This is previewed in Figure~\ref{fig:cc} (right).

\begin{lemma}\label{l:irrat}
For a jump of size $\jp$ where $\jp/\pi$ is irrational and $0<\jp<\pi/2$, there is an antipodal 
lawn $L$ with greater probability of a successful jump than the hemispherical lawn.
\end{lemma}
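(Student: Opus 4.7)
The plan is to imitate the construction of Lemma~\ref{l:peven}, replacing the closed cycle of $q$ caps by an \emph{open chain} of $n$ caps whose length $n$ we are free to choose. Fix a positive integer $n$ (to be determined at the end), place $n$ caps of radius~$r$ centred at the equatorial points $P_j=j\jp\pmod{2\pi}$ for $j=0,1,\ldots,n-1$ just above the equator, and place $n$ antipodal cups centred at $P_j+\pi$ just below the equator; this produces an antipodal lawn.

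Feasibility of the construction relies on $\jp/\pi$ being irrational. The $2n$ centres are pairwise distinct, and no inter-centre distance other than the $\jp$ between chain-consecutive cap-centres or chain-consecutive cup-centres equals $\jp$: any such coincidence would give a relation of the form $k\jp\equiv\pm\jp\pmod{2\pi}$ or $k\jp\equiv\pi\pm\jp\pmod{2\pi}$ with $k\ne\pm 1$, violating the irrationality of $\jp/\pi$. For any fixed $n$, one may therefore choose $r>0$ small enough that all caps and cups are pairwise disjoint and that a jump of length $\jp$ from any cap reaches only its chain-adjacent caps (and similarly for cups).

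The counting then proceeds in close analogy to Lemma~\ref{l:peven}, the only substantive difference being that an open chain of $n$ caps has $n-1$ consecutive pairs rather than $n$. Hence $\br{AA}=2(n-1)\br{aa}$ and $\br{UU}=2(n-1)\br{uu}$, while $\br{AS}=n\br{aS}$ and $\br{US}=n\br{uS}$, since the per-cap integrals defining $\br{aS}$ and $\br{uS}$ are geometric quantities of the cap alone and do not depend on whether neighbours exist. Substituting into Equation~\eqref{eq:LL}, applying the symmetries~\eqref{eq:sym1} and~\eqref{eq:sym2}, and invoking Proposition~\ref{prop:aaNS} will yield an expression of the form
\begin{equation*}
\br{LL}-\br{SS}=4\br{aa}\bigl[n(1-\cos\jp)-1-n\cdot O(\sqrt{r})\bigr].
\end{equation*}

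The hard part is managing the ``endpoint defect'' produced by using an open chain instead of a cycle; this is the constant $-1$ inside the bracket. Since $0<\jp<\pi/2$ forces $\cos\jp<1$, the plan is to first choose $n$ large enough that $n(1-\cos\jp)\ge 2$, and then choose $r>0$ small enough, depending on $n$, both to enforce the feasibility conditions above and to make the $n\cdot O(\sqrt{r})$ error at most $1/2$. For such $n$ and $r$ the bracket is bounded below by $1/2$, so $\br{LL}>\br{SS}$, as required.
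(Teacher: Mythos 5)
Your construction, counting, and final estimate coincide with the paper's own proof: the same open chain of $n$ caps with antipodal cups, the same table of contributions giving $\br{LL}-\br{SS}=4(n-1)\br{aa}-2n(\br{aN}-\br{aS})$, and the same order of quantifiers (first $n$ large enough that $n(1-\cos\jp)>1$, then $r$ small depending on $n$). The proposal is correct and essentially identical to the paper's argument, with your explicit irrationality check for non-adjacent centre distances being a slightly more detailed version of what the paper asserts.
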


\begin{proof}
Beginning at an arbitrary point on the equator, we make a sequence of $n$ caps of radius $r$
corresponding to $n-1$ jumps of size around the equator. Cups are placed at the antipodal positions. 
The value of $n$ will be chosen later, but whatever the value of $n$, the irrationality of $\jp/\pi$ 
ensures that the centres of the caps and cups are all distinct. The value of $r$ is chosen small enough that there is no overlap 
among the caps and cups, and, further, a jump from a cap cannot get to a cup and a jump from a cup cannot get to a cap. 
The difference from the previous case is that now the forward jump from the $n$th cap in the sequence  
and the backward jump from the first cap may not reach a cap, and similarly for the cups. 
This difference is shown in Table~\ref{table:irrat}.

\begin{table}[thp]
\begin{center}
\begin{tabular}{|c|c|c|c|}
   \hline
   &  $A$  &  $U$ &  $S$ \\
   \hline
$A$ & $2(n-1) \cdot \br{aa}$ & 0 & $n \cdot \br{aS}$ \\
   \hline
$U$ & 0 & $2(n-1) \cdot \br{uu}$  & $n \cdot \br{uS}$ \\
   \hline
$S$ & $n \cdot \br{Sa}$ & $n \cdot \br{Su}$ & $\br{SS}$\\
   \hline
\end{tabular}
\caption{Classification of jumps for Lemma~\ref{l:irrat}}
\label{table:irrat}
\end{center}
\end{table}%

We now find that the total probabilities yield
\begin{eqnarray*}
\br{LL}-\br{SS}&=& 4(n-1) \cdot \br{aa}-2n \cdot (\br{aN}-\br{aS}) \\
&=& 4(n-1) \cdot \br{aa}\left(1-\left(\frac{n}{n-1}\cos\jp + O(\sqrt r)\right)\right) ,
\end{eqnarray*}
by Proposition~\ref{prop:aaNS} as before.
Since $\cos\jp < 1$, we may choose $n>1/(1-\cos\jp)$ and $r$ sufficiently small so that $\br{LL} > \br{SS}$.
\end{proof}


\subsection{Rational jumps, odd numerator} \label{sec:podd}

This final case for our main result ($\jp=p \pi/q$ with $p/q$ irreducible, $p>1$ and $p$ odd) is more complicated than the previous cases. After $q$ jumps round 
the equator, we have reached a point making an angle $p\pi$ with the initial point. 
Since $p$ is odd, this is the point antipodal to the initial point. 

\begin{lemma}\label{l:podd}
For a jump of size $\jp=p \pi/q$ with $p/q$ irreducible, $1<p<q/2$ and $p$ odd, there is an antipodal 
lawn $L$ with greater probability of a successful jump than the hemispherical lawn.
\end{lemma}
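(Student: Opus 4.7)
The plan is to extend the cogged-hemisphere construction of Lemma~\ref{l:peven} to the odd-numerator case. The essential new feature is that, since $p$ is odd, $q$ consecutive jumps of size $\jp = p\pi/q$ along the equator reach the antipode of the starting point rather than returning to it. Consequently the natural cap cycle cannot close upon itself: the forward jump from the final cap and the backward jump from the initial cap each land in a cup at distance exactly $\jp$, creating unwanted cap-to-cup connections that were absent in the even-$p$ case. To manage this I would vary the cap and cup radii across positions, shrinking the two endpoint caps and their antipodal cups so that the problematic cap-to-cup contributions fall to strictly lower order than the cap-to-cap gains.

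Concretely, starting from a fixed point $P_0$ on the equator, place $q$ semicircular caps above the equator at the positions $P_j = j\jp \bmod 2\pi$, $j = 0, \ldots, q-1$, and $q$ semicircular cups below the equator at the antipodal positions. The $q-2$ interior caps and their antipodal cups are assigned a common small radius $r$, while the two endpoint caps at $P_0$ and $P_{q-1}$, and their antipodal cups, are given a smaller radius $r_{\mathrm{end}} = r^{1+\epsilon}$ for a small $\epsilon > 0$ to be chosen. The jump classification parallels Lemma~\ref{l:peven}, except that the decomposition~\eqref{eq:LL} must now be augmented with the cap-to-cup terms $\br{AU} = \br{UA}$ coming from the two endpoint switches. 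Using the symmetries $\br{Sa}=\br{aS}$, $\br{uS}=\br{aN}$, $\br{uu}=\br{aa}$ and noting that the endpoint cap contributions to $\br{aa}$, $\br{aN}-\br{aS}$, and $\br{au}$ are all $O(r^{3+2\epsilon}) = o(r^3)$, the leading balance reduces to
\[
\br{LL} - \br{SS} = 4(q-3)\br{aa} - 2(q-2)(\br{aN}-\br{aS}) + o(\br{aa}).
\]

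Applying Proposition~\ref{prop:aaNS} to the interior pairs gives $(\br{aN}-\br{aS})/(2\br{aa}) = \cos\jp + O(\sqrt r)$, so the leading term becomes $4\br{aa}\bigl(q(1-\cos\jp) - (3-2\cos\jp)\bigr) + o(\br{aa})$, which is positive for small enough $r$ provided $q(1-\cos\jp) > 3 - 2\cos\jp$. This inequality holds easily when $\jp$ is of order $1$, and more generally whenever $q$ is not too large relative to $p$.

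The main obstacle will be the regime in which $q$ is very large compared to $p$, so that $\jp = p\pi/q$ is small and $q(1-\cos\jp) \sim p^2\pi^2/(2q)$ is itself small, meaning the single equatorial chain cannot deliver a positive margin. In this regime I expect the construction to require multiple parallel cap-chains placed at several disjoint latitudinal bands of the hemisphere, with the latitudinal spacings chosen to exceed $\jp + 2r$ so that jumps cannot cross between bands, and with antipodal cup-chains placed symmetrically to preserve the antipodal condition. Arranging the bookkeeping so that the combined cap-cap gains from many chains overcome the accumulated endpoint penalties, while respecting the geometric constraint that the total number of nearly non-overlapping chains available on the hemisphere must grow faster than $1/(1-\cos\jp)$, is where I expect the main technical work to lie.
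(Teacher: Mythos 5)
Your construction is correct in spirit for part of the parameter range, but it has a genuine gap that your own closing paragraph correctly senses and then fails to close. With a single equatorial chain of $q-2$ full-size caps and two negligible endpoint caps, your balance $4(q-3)\br{aa}-2(q-2)(\br{aN}-\br{aS})$ is positive only when $q(1-\cos\jp)>3-2\cos\jp$. For fixed odd $p$ and large $q$ one has $q(1-\cos\jp)\sim p^2\pi^2/2q\to 0$ while $3-2\cos\jp\to 1$, so the construction fails for infinitely many $(p,q)$ covered by the lemma (e.g.\ $p=3$, $q$ large). Your proposed rescue --- several parallel chains at distinct latitudes --- cannot work. The obstruction is not an ``accumulated endpoint penalty'' but a \emph{per-cap} deficit: each interior cap costs $\br{aN}-\br{aS}\approx 2\br{aa}\cos\jp$ against a gain of at most $2\br{aa}$ shared with its two neighbours, and this deficit is replicated, not diluted, by adding more chains. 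Worse, a cap centred at latitude $\la$ with $\la\gg r$ has $\br{aN}-\br{aS}=\Theta(r^2)$ rather than $\Theta(r^3)$ (an order-one fraction of its jump circle lies in the northern hemisphere), which swamps the $\Theta(r^3)$ cap-to-cap gains entirely; so off-equatorial chains are strictly harmful at leading order.

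The paper's resolution is different and is the key idea you are missing: instead of shrinking only the two endpoint caps, taper \emph{all} the caps, giving the cap at position $j\jp$ a vertical contraction factor $s_j=\sin(j\pi/q)$ (width $2r$ kept fixed, height scaled by $s_j$), with antipodal cups contracted likewise; the positions $j=0$ and $j=q$ then automatically carry no cap or cup. One checks that $\br{a_s a_t}=st\cdot\br{aa}+O(r^{7/2})$ and $\br{a_sN}-\br{a_sS}=s^2(\br{aN}-\br{aS})+O(r^5)$, so by the identities $\sum_{j=0}^{q-1}2\sin\frac{j\pi}{q}\sin\frac{(j+1)\pi}{q}=q\cos\frac{\pi}{q}$ and $\sum_{j=0}^{q}2\sin^2\frac{j\pi}{q}=q$ (Corollary~\ref{cor:sums}) the totals become $\br{AA}=q\cos\frac{\pi}{q}\cdot\br{aa}+O(r^{7/2})$ and $2(\br{AN}-\br{AS})=q(\br{aN}-\br{aS})+O(r^5)$, whence
\begin{equation*}
\br{LL}-\br{SS}=2q\cdot\br{aa}\left(\cos\tfrac{\pi}{q}-\cos\jp+O(r^{1/2})\right),
\end{equation*}
which is positive for all $1<p<q/2$ since $\cos\frac{p\pi}{q}<\cos\frac{\pi}{q}$. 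The sinusoidal taper is exactly what converts your unfavourable ratio $(q-3)$ versus $(q-2)\cos\jp$ into the uniformly favourable comparison $\cos(\pi/q)$ versus $\cos(p\pi/q)$; without it, or some equally substantive replacement, your argument does not prove the lemma as stated.
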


\begin{proof}
Consider the set of $2q$ points around the equator at angles $j\cdot p\pi/q$ from some initial point, for $0\leq j < 2q$. 
We will put caps at the points corresponding to $0<j<q$ and cups at points corresponding to $q<j<2q$. The 
positions for $j=0$ and $j=q$ have neither. (The arrangement is similar to the lawn in the proof of Lemma~\ref{l:irrat},
except in the present case we cannot take $n$, the number of caps, to be arbitrarily large. 
We are restricted to $n\leq q-1$, and we may have $(q-1)/(q-2) \cdot \cos (p\pi/q) \ge 1$ for large $q$.)

To achieve the required inequality we need to modify the shapes of the caps and cups. Recall that the standard caps and cups 
are bounded by semicircles of angular radius $r$. For $0\leq s\leq 1$ we define a cap $a_s$ to be just like a standard cap 
except that its upper boundary is the usual semicircle but with its height (latitude) reduced by a factor $s$:
that is, it is the set of all points $(\lo, s \la)$ such that $(\lo, \la)$ belongs to the original cap~$a$.
Note that its width (major axis) remains as $2r$. 
The analysis of jump probabilities is almost the same as in Section~\ref{sec:anal} with just obvious changes. 
Suppose we are considering jumps from a cap $a_s$ to a cap $a_t$.
Proposition~\ref{prop:R} works just as before. The results in Proposition~\ref{prop:betas} hold with the modified 
equations:
\begin{eqnarray*}
(i')& \beta_1 &= s v\cot\jp + O(r^3),\\
(ii')& \beta_1+\beta_2 &= t\sqrt{r^2-u^2}\csc\jp + O(r^{3/2}).
\end{eqnarray*}
For the integrals in the proof of Proposition~\ref{prop:aaNS}, we now derive:
$$2\br{a_s a_t} = \iint\limits_{C_r'} \left(2t\sqrt{r^2 - u^2} \cdot \csc \jp + O(r^{3/2})\right) du \, dv ,$$
and
$$\br{a_s N}-\br{a_s S} = \iint\limits_{C_r'} \left(2 s \sqrt{r^2 - u^2} \cdot \cot \jp + O(r^3)\right) du \, dv  ,$$
where 
\begin{equation*}
C_r' = \{ (u, v) \mid \cos u \cos (v/s) \ge \cos r , -r \le u \le r \text{\ and\ } 0 \le v \le s r\}
\end{equation*}
denotes a cap $a_s$.
Then 
\begin{equation}
\br{a_s a_t} = s t \cdot \br{aa} + O(r^{7/2}) \mathrm{\ and\ } \br{a_s N}-\br{a_s S} = s^2 (\br{aN}-\br{aS}) + O(r^5). 
\label{eq:asat}
\end{equation}

With preparations complete, we now define our lawn $L$. We put modified caps centred at the points round 
the equator with angles $j \jp = j p \pi/q$ for $0\leq j\leq q$. The cap corresponding to $j$ has contraction $s$ 
where $s=\sin ( j \pi /q)$. We place modified cups at points with angles $j \jp$ for $q\leq j\leq 2q$ to give an 
antipodal lawn. Note that for $j=0$, $q$ and $2q$, the contraction ratio is $0$, 
since $\sin 0=\sin \pi=\sin 2\pi=0$. 
So there is no contradiction and these two positions effectively have no cap or cup. 

With our defined sequence of caps, from equation~\eqref{eq:asat} and Corollary~\ref{cor:sums}, we see that
\begin{eqnarray*}
\br{AA} &=& \sum_{j=0}^{q-1} 2\sin \frac{j\pi}{q} \sin\frac{(j+1)\pi}{q} \cdot \br{aa} + O(r^{7/2}) \\
 &=& q\cos\frac{\pi}{q} \cdot \br{aa} + O(r^{7/2}) , \mathrm{\ and\ } \\
2(\br{AN} -\br{AS}) &=& \sum_{j=0}^q 2\sin^2\frac{j\pi}{q}  \cdot (\br{aN}-\br{aS}) + O(r^5) \\
 &=& q \cdot (\br{aN}-\br{aS}) + O(r^5) . 
\end{eqnarray*}
Now, 
\begin{eqnarray*}
\br{LL}-\br{SS} &=& 2\br{AA} - 2(\br{AN} -\br{AS}) \\
 &=& q\cos\frac{\pi}{q} \cdot 2\br{aa} - q (\br{aN}-\br{aS}) + O(r^{7/2}) \\
 &=& 2q\cdot\br{aa} \left(\cos\frac{\pi}{q} - \frac{\br{aN}-\br{aS}}{2\br{aa}} + O(r^{1/2})\right) \mathrm{\ since\ } \br{aa} = \Theta(r^3)\\
 &=& 2q\cdot\br{aa} \cdot \left(\cos\frac{\pi}{q} - \cos\jp + O(r^{1/2})\right) \\
 &>& 0
\end{eqnarray*}
for sufficiently small $r$, since $\cos\jp=\cos\frac{p\pi}{q} < \cos\frac{\pi}{q}$ for $1<p<q/2$.
\end{proof}

\subsection{Summary}
Taken together, Theorem~\ref{thm:opt} and Lemmas~\ref{l:peven}, \ref{l:irrat} and~\ref{l:podd} give our principal result.
\begin{maintheorem}\label{thm:main}
For jumps of size $\jp$ with $0<\jp<\pi/2$, the hemispherical lawn gives  
the greatest probability of a successful jump if and only if $\jp = \pi/q$ for some integer $q>1$.
\end{maintheorem}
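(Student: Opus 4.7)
The statement is a clean combination of results already established in the section, so my plan is to prove it by reducing it to a finite case analysis on the arithmetic nature of $\jp/\pi$ and invoking the relevant existing result in each case.

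For the \emph{if} direction, suppose $\jp = \pi/q$ for some integer $q > 1$. Theorem~\ref{thm:opt} tells us that for any antipodal lawn on the sphere the retention probability is at most $1-1/q$. A hemispherical lawn defines an antipodal colouring whose restriction to any great circle through the jump direction is a semi-circular lawn, which by the circle analysis achieves exactly $1-1/q$. Hence the hemisphere attains the supremum and is optimal.

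For the \emph{only if} direction, I take any $\jp \in (0, \pi/2)$ with $\jp \neq \pi/q$ for every integer $q>1$ and exhibit a strictly better antipodal lawn. I split into three exhaustive cases:
\begin{enumerate}
\item $\jp/\pi$ is irrational: apply Lemma~\ref{l:irrat} directly.
\item $\jp = p\pi/q$ with $(p,q)=1$ and $p$ even: apply Lemma~\ref{l:peven}.
\item $\jp = p\pi/q$ with $(p,q)=1$ and $p$ odd: since by hypothesis $\jp \neq \pi/q$, we must have $p > 1$; and since $\jp < \pi/2$, we have $p/q < 1/2$, hence $p < q/2$. Thus the hypotheses of Lemma~\ref{l:podd} are satisfied, and we apply it.
\end{enumerate}
Each of these lemmas constructs an antipodal lawn $L$ (a cogged hemisphere with suitably placed caps and cups, possibly of varying heights) whose retention probability strictly exceeds that of the hemisphere, so the hemisphere is not optimal.

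The main (only) obstacle is making sure the case split is exhaustive and that each case feeds into the right lemma with its hypotheses intact; in particular I need to verify that the excluded value $p=1$ in Lemma~\ref{l:podd} is precisely the value we have already handled in the \emph{if} direction, and that the upper bound $p < q/2$ required there is automatic from $0 < \jp < \pi/2$. Once these arithmetic bookkeeping checks are done, the theorem follows by direct substitution, with no further geometric content beyond what the cited results already supply.
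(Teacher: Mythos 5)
Your proposal is correct and is essentially the paper's own proof: the paper likewise obtains Theorem~\ref{thm:main} by combining Theorem~\ref{thm:opt} (the \emph{if} direction, with the hemisphere attaining $1-1/q$ via the great-circle reduction) with Lemmas~\ref{l:peven}, \ref{l:irrat} and~\ref{l:podd} for the three exhaustive arithmetic cases in the \emph{only if} direction. Your bookkeeping checks (that $p=1$ odd is exactly the excluded case $\jp=\pi/q$, and that $p<q/2$ follows automatically from $\jp<\pi/2$) are the right ones and are sound.
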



\section{Discussion}

In quantum physics, we can represent projective measurements on a qubit
(for example, the polarization state of a photon) by an
ordered pair of antipodal points on the Bloch sphere, where
the first point represents the projector corresponding to
the outcome obtained and the second point the orthogonal
projector, which defines the other possible outcome.   
The circle represents a subset of these measurements; 
linear polarization measurements are a natural choice in
the photon case.   

One of the motivations of Refs. \cite{kent2014bloch,goulko2017grasshopper} was
to work towards a more general class of Bell inequalities,
by identifying the maximum average anti-correlation obtainable
from local hidden variable models for pairs of measurements
chosen randomly subject to the constraint that they are
separated by a fixed angle on the Bloch sphere. 
This gives a Bell inequality whenever the quantum anti-correlation for the singlet
(or any other given state) is larger.
One can also ask whether this type of ``grasshopper'' Bell inequality
can be obtained for measurements restricted to a circle on
the Bloch sphere.   The ease of linear polarization 
measurements, and the desirability of distinguishing
entangled quantum states from separable states (or
classical systems designed to try to mimic quantum states)
as efficiently as possible, make this a potentially
practically significant question.  

As noted in Ref. \cite{goulko2017grasshopper}, it is simple to find a general
analytic solution for the grasshopper problem on the real line for any
jump distance.  In contrast, solutions to the planar version 
exhibit quite complex behaviour, and analytic solutions are not
presently known.   Intuitively, this appears to reflect the
fact that the problem simplifies when the jump (which is
effectively one dimensional) takes place in a space of
the same dimension.    Our results on the circle support
this intuition, but nonetheless have non-trivial features
arising from the circle's topology. In particular, they
distinguish between a measure zero set of jumps that are rational multiples
of $\pi$ of a particular form, for which the upper bound on  lawn retention probabilities is below $1$, and the remainder, for which 
the supremum retention probabilities are $1$.

\subsection{Results on the circle}

The results for antipodal lawns are particularly interesting.
For a single antipodal lawn, they sharply 
distinguish between the case where the
jump is a rational multiple of $ \pi$ with odd numerator 
and other jump values.   In particular, they show that, even for
the single lawn grasshopper problem, informal arguments based on continuous
dependence on the jump value can fail.   
For a pair of antipodal lawns, they similarly sharply distinguish
the case of jump $\pi \frac{p}{q}$ with $p$ odd and $q$ even  
from other values.     

An antipodal lawn on the circle then represents a simple
local hidden variable model for photon polarization
measurements.   For two photons, the most general model
uses two different antipodal lawns, which are not necessarily
related.   We can motivate the simpler model given by a 
single antipodal lawn as a specific type of local hidden variable
model that attempts to reproduce the correlations 
of a singlet photon state, modelling one photon (A) by a lawn $L$ and the other (B)
by the ``opposite'' lawn $\overline{L}$.   This model 
implies that if the same linear polarization measurement is
made on both A and B then different results will always
be obtained, reproducing the perfect anti-correlation
for identical measurements exhibited by the singlet.    

In either case, Bell's theorem \cite{bell1989einstein}
shows that no local hidden variable model
can reproduce all the quantum correlations of the singlet.
Bell's theorem applies even if we restrict to linear polarization
measurements, as can be seen from 
Bell inequalities such as the CHSH \cite{clauser1969proposed} and
Braunstein-Caves \cite{braunstein1990wringing} inequalities.
These give quantitative bounds  
on sums of local hidden variable correlations, which
are violated by the (experimentally verified)
predictions of quantum theory for the singlet.

Our results show that ``grasshopper" Bell inequalities
corresponding to jumps on the circle, based
on the correlations attainable by linear photon polarization
measurements about randomly chosen axes separated by a given
fixed angle, do exist.  
However, they do so only in the case where the measurements
correspond to Bloch sphere axes separated by $\pi \frac{p}{q}$ with $p$ odd
and $q$ even. 
In these cases, the grasshopper Bell inequalities (which demonstrate
a non-trivial bound on the optimal lawn) follow from the
CHSH and Braunstein-Caves inequalities.

The CHSH and Braunstein-Caves inequalities apply to 
measurements chosen randomly from fixed finite sets.
They are robust against measurement errors.
This is because the inequalities apply to any sets containing
the right number of measurements, whether or not 
the measurements are separated by the optimal angles. 
Quantum correlations violate the inequalities maximally
when the optimal angles are chosen, but also violate
the inequalities if there are  small imprecisions in specifying
the measurements.    

It is important to stress that our generalized Bell inequalities corresponding to jumps on the circle are
{\it not} robust in the same sense.
If two parties aim to carry out 
measurements on the relevant circle that are randomly
chosen subject to the constraint that they 
are separated by angle $\pi \frac{p}{q}$ (for $p$ odd and $q$ even), but only
approximate this separation, they cannot be guaranteed
that their measurement results are inconsistent with
a local hidden variable model.   Indeed, an adversary who
knows the precise value of the separation $x \neq \pi \frac{p}{q}$ 
could substitute a pair of classical devices for their
quantum systems and achieve a {\it stronger} anticorrelation
than quantum theory predicts.   This is true no matter
how closely $x$ approximates $\pi \frac{p}{q}$, so long
as $x$ is not itself a rational multiple of $\pi$.    
Knowing a significantly better rational approximation to $x$ than
$\frac{p}{q}$ will also often suffice.   

In other words, the grasshopper Bell inequalities corresponding to 
measurements defined by a Bloch great circle display a finite precision
loophole somewhat analogous to that arising for Kochen-Specker inequalities \cite{meyer1999finite,kent1999noncontextual,clifton2000simulating,barrett2004non}.
They are mathematical results that cannot directly be tested experimentally, without
further assumptions, since in 
practice no experiment can specify the separation between measurement axes with 
the infinite precision required.\footnote{For discussions of the status of experiments
motivated by Kochen-Specker inequalities see e.g. Ref. \cite{sep-kochen-specker} and references
therein.}

That said, it is questionable how exploitable this finite precision loophole
would be in practice.   One might imagine a test of the Bell
inequalities in which pairs of measurement angles of the form
$(\cjump_A , \cjump_B)$ with $ | \cjump_B - \cjump_A | = \cjump$ 
are randomly generated in advance, and securely distributed to 
spacelike separated sites $A$ and $B$, where dials on the measurement
devices are set to measure angles $\cjump_A$ and $\cjump_B$
respectively.  
The devices might have small and consistent errors, so that settings
$\cjump_A$, $\cjump_B$ produce measurements about the axes of angles 
$\cjump_A + \delta_A$, $\cjump_B + \delta_B$ respectively.
The separations are thus actually $\cjump \pm (\delta_B - \delta_A )$,
when $\cjump_B - \cjump_A = \pm \cjump$.   
It is plausible that in this scenario the adversary (unlike
the experimenter) might know the values of $\delta_A$ and 
$\delta_B$ very precisely, perhaps for example because the adversary supplied the
experimental equipment, and plausible also that $\cjump \pm (\delta_B - \delta_A )$
may be irrational.     
If, for some reason, the experiment is restricted to one sign
choice (a one-directional grasshopper jump, say clockwise), then the 
adversary could indeed exploit the loophole, since all
jumps are through the same angle $\jp + (\delta_B - \delta_A )$,
known to the adversary but not the experimenter.

Alternatively, if, for some reason, the adversary learns in advance
which runs of the experiment involve clockwise jumps and which
involve counterclockwise (perhaps because the experimenter unwisely
carries out the first set in the morning and the second in
the afternoon, for example), she can exploit her knowledge
of the jump values in each run.   
However, without such restriction or advance knowledge, the
adversary's advantage is less clear.  Finer analysis is needed
to resolve this.  
  
In any case, the finite precision loophole on the 
circle does not imply that ``grasshopper'' Bell inequalities
obtainable on the Bloch sphere are similarly unrobust.
Although the results for the circle are relevant
to the spherical case, one should not expect
a close similarity.  The two dimensionality
of the sphere  makes a crucial difference, and 
non-zero separations between local hidden variable and quantum
anti-correlations are already known for a continuous range of jump
angles on the Bloch sphere \cite{kent2014bloch}.  
Indeed, as we discuss below, 
our results imply not only that there are new grasshopper Bell inequalities on the sphere, 
but also that for generic jumps these do not follow from known
inequalities, unlike those on the circle.

Our results do, however, emphasize the need to specify carefully
the class of measurements in any analysis, and the potential dangers
in naively assuming that it is safe to impose 
even apparently natural and
inconsequential restrictions on that class.   

Finally, we note that the circle is a natural setting
for exploring intuitions about generalisations of the grasshopper
problem such as those mentioned in Ref. \cite{goulko2017grasshopper}. 
For example, the ``ant problem'' \cite{goulko2017grasshopper}, in
which the ant attempts to walk along the path that a grasshopper
would have jumped, but dies if it ever leaves the lawn, is 
trivially solved on the circle.
This is because, for the ant, a connected lawn can never
be inferior to a disconnected lawn of the same length and is superior
unless the retention probability is zero.  
The optimal solutions are thus continuous arcs of length $L$; in the case of
antipodal lawns they are semi-circles of length $\pi$. 
Ant-optimal lawns are thus not 
generally grasshopper-optimal.    
However, if we restrict to antipodal lawns, there is a discrete set of jump 
angles for which ant-optimal lawns are also grasshopper-optimal. 

It would be interesting to explore the ``grasshopper in a breeze''
\cite{goulko2017grasshopper} on the circle, where (if we take the 
breeze to have constant angular momentum) it translates to 
requiring that the grasshopper jumps equiprobably through fixed angle $\cjump_+$ 
clockwise or fixed angle $\cjump_-$ anti-clockwise, where $\cjump_+ \neq
\cjump_-$.
It would also be interesting to explore versions \cite{goulko2017grasshopper} of the problem
in which the jump angle is drawn from a probability 
distribution (which may be symmetric about zero or, as in
the breeze example, may be asymmetric).

\subsection{Results on the sphere}

The grasshopper problem was originally motivated \cite{kent2014bloch} 
as a problem on the Bloch sphere, where antipodal lawns represent hidden variable
models for projective measurements, and this remains the most 
interesting setting for fundamental physics applications.
Although the versions of the 
problem considered previously on the plane \cite{goulko2017grasshopper}
and the versions on the circle considered above are independently
interesting, a major motivation for studying them has been to develop
intuitions for the problem on the sphere, in order ultimately to
resolve the questions about Bell inequalities raised in Ref. \cite{kent2014bloch}.

Our results represent significant progress in this direction.  
In particular, they refute the weak hemispherical colouring maximality
hypothesis set out in Ref. \cite{kent2014bloch} and so also the strong
hemispherical colouring maximality hypothesis set out in the same
paper.\footnote{The authors of Ref. \cite{kent2014bloch} carefully 
considered whether to describe either hypothesis as a conjecture
and chose not to, on the grounds that they did not have 
strong enough evidence to justify a conjecture, 
in the sense of the term commonly used by pure mathematicians.}
Theorem \ref{thm:main} shows that there are local hidden variable models for entangled two qubit 
systems that achieve stronger anticorrelations than the 
simple models defined by two hemispherical colourings for random 
pairs of measurement axes separated by fixed angle $\jp$,
for almost all angles $\jp$.   
This is intriguing, given that such models nonetheless cannot \cite{kent2014bloch}
achieve anticorrelations as strong as those predicted by 
quantum theory, at least for $0 < \jp < \frac{\pi}{3}$.  

Theorem \ref{thm:main} thus shows that there are previously 
undiscovered types of Bell inequality, separating
the predictions of local hidden variable theories and quantum theory,
that are based solely on the strength of these anticorrelations.
To identify tight versions of these Bell inequalities, one would need
to identify the supremum of the retention probabilities for pairs of
antipodal lawns, as a function of $\jp$.   
Numerical investigations \cite{vanbreugel,piosevan}, adapting the methods of 
Ref. \cite{goulko2017grasshopper}, indicate that the optimal shapes for
generic jumps up to $\approx 0.44 \pi$
resemble a cogwheel-type configuration, similar to those found in the 
planar case, and give estimates of this supremum function.  
We expect that carefully cross-checked and tested numerical results
for the spherical problem will appear in due course.   
Analytic results and proofs would of course be even more 
satisfying.   We hope our results
will motivate further work in these directions.

While our results apply directly only to Bell inequalities associated with
projective measurements on qubits, they strongly suggest that similar
types of Bell inequalities are likely to arise for other types of
measurements and for higher dimensional Hilbert spaces.   This looks
a fruitful direction to explore for extending our mathematical
understanding of quantum Bell non-locality.


\section*{Acknowledgements}

D.C. and M.P. are grateful to Sabine Hossenfelder for
the introduction to the grasshopper problem in her blog~\cite{h18}.
O.G. and A.K. are grateful to Carlo Piosevan and
Dami{\'a}n Pital{\'u}a-Garc{\'\i}a for interesting and helpful discussions.
A.K. was supported by UK Quantum Communications Hub grants
no. EP/M013472/1 and EP/T001011/1 and by an FQXi grant. 
A.K. and O.G. were supported by Perimeter Institute
for Theoretical Physics. Research at Perimeter Institute is supported
by the Government of Canada through Industry Canada and by the
Province of Ontario through the Ministry of Research and Innovation. 

\section*{References}

\bibliographystyle{unsrtnat}
\bibliography{globehopping}
\end{document}